\documentclass[11pt]{article}

\usepackage[a4paper,margin=1in]{geometry}
\usepackage{amsmath,amssymb,amsthm,mathtools,bm}
\usepackage{booktabs}
\usepackage{multirow}
\usepackage{graphicx}
\usepackage{enumitem}
\usepackage{natbib}
\usepackage{hyperref}
\usepackage{xcolor}
\usepackage{microtype}
\usepackage{placeins}
\allowdisplaybreaks

\hypersetup{
  colorlinks=true,
  linkcolor=blue!50!black,
  citecolor=blue!50!black,
  urlcolor=blue!50!black
}

\newtheorem{definition}{Definition}
\newtheorem{assumption}{Assumption}
\newtheorem{proposition}{Proposition}
\newtheorem{theorem}{Theorem}
\newtheorem{corollary}{Corollary}
\newtheorem{remark}{Remark}

\DeclareMathOperator{\E}{\mathbb{E}}
\DeclareMathOperator{\Var}{Var}

\DeclareMathOperator{\argmin}{arg\,min}
\DeclareMathOperator{\argmax}{arg\,max}
\DeclareMathOperator{\Tr}{Tr}
\DeclareMathOperator{\VaR}{VaR}
\DeclareMathOperator{\ES}{ES}
\newcommand{\R}{\mathbb{R}}
\newcommand{\Q}{\mathbb{Q}}

\newcommand{\calA}{\mathcal{A}}
\newcommand{\calC}{\mathcal{C}}

\newcommand{\calU}{\mathcal{U}}

\newcommand{\calG}{\mathcal{G}}
\newcommand{\calL}{\mathcal{L}}

\newcommand{\calP}{\mathcal{P}}

\newcommand{\calJ}{\mathcal{J}}
\newcommand{\eps}{\varepsilon}
\newcommand{\norm}[1]{\left\lVert #1 \right\rVert}
\newcommand{\inner}[2]{\left\langle #1,#2 \right\rangle}
\newcommand{\wideG}{\widehat{\mathcal{G}}_{\theta}}

\title{Derivative-Informed Operator Learning for Finance:\\ On-the-Fly Greeks, Surfaces, Hedging, and Control}
\author{Miquel Noguer i Alonso\\Artificial Intelligence Finance Institute}
\date{June 2026}

\begin{document}
\maketitle

\begin{abstract}
Modern financial decision systems increasingly require fast surrogate models for pricing, calibration, hedging, stress testing, XVA, and portfolio optimization. Standard neural surrogates are commonly trained to reproduce prices or risk quantities, but downstream financial tasks depend at least as much on derivatives: deltas, vegas, curve sensitivities, credit spread sensitivities, exposure gradients, and gradients of optimization objectives. This paper formulates a derivative-informed operator-learning framework for finance in which the learned map---whether a neural operator, a random-feature operator, or a finite-dimensional surrogate---is trained not only to match a high-fidelity pricing or risk operator, but also to match directional Fr\'echet derivatives generated on the fly during training. The proposed framework combines operator learning, adjoint algorithmic differentiation, tangent sensitivity equations, random sketching of Jacobian actions, and finance-specific no-arbitrage constraints. We derive error bounds showing that derivative accuracy controls local stress errors, hedging error, and optimizer instability, and that discrete-time hedging error is additionally governed by \emph{second-order} (gamma) accuracy. Three reproducible experiments separate mechanism from deployment claims. A trained Black--Scholes network, repeated over eight seeds, shows that a tuned derivative weight reduces vega error by roughly $40\%$ and delta error by roughly $15\%$ while modestly improving price accuracy, but does not reliably improve an unsupervised second-order Greek. Heston and Bates random-feature experiments reduce stochastic-volatility and jump-parameter sensitivity errors by $60$--$76\%$. Finally, a stylized operator-level curve-to-surface benchmark, implemented as a random-feature DeepONet/Galerkin neural operator, maps instantaneous-volatility curves to dense price surfaces and reduces out-of-sample directional JVP error by $44\%$ and price RMSE by $23\%$ over eight seeds. The same operator experiment also shows that derivative consistency does \emph{not} by itself remove no-arbitrage violations, so economic constraints must be imposed explicitly. The framework provides a disciplined route from value-only surrogates to derivative-aware financial engines whose outputs are useful not only as prices, but as differentiable instruments for hedging, risk, and control.
\end{abstract}

\noindent\textbf{Keywords:} neural operators; derivative-informed training; Greeks; adjoint algorithmic differentiation; financial engineering; XVA; hedging; portfolio optimization; risk sensitivities; Fourier neural operators.

\section{Introduction}

Financial engineering is an operator-learning discipline in disguise. A pricing engine maps an input market state to a surface of prices. A calibration engine maps quotes to model parameters or implied volatility surfaces. An XVA engine maps curves, credit spreads, collateral conventions, netting sets, and exposure profiles to valuation adjustments. A portfolio optimizer maps forecasts, covariances, constraints, transaction costs, and liquidity states to allocations. In each case the object of interest is not merely a finite-dimensional regression function, but a structured map between functions, surfaces, measures, or stochastic processes.

Neural operators were introduced to learn maps between infinite-dimensional function spaces rather than fixed finite-dimensional vectors. DeepONet learns nonlinear operators through a branch-trunk architecture, while the Fourier neural operator (FNO) parameterizes integral kernels in Fourier space and has become a canonical architecture for parametric PDE solution operators \citep{LuJinKarniadakis2021DeepONet,LiKovachkiEtAl2021FNO,KovachkiEtAl2023NeuralOperator}. These ideas are naturally aligned with financial models, where the input may be an entire yield curve, volatility surface, local-volatility field, forward curve, credit-intensity curve, or cross-asset correlation structure, and the output may be an option price surface, exposure profile, or risk-measure field.

However, a price-only surrogate is not sufficient for finance. A desk rarely needs only the level of a price. It needs deltas, gammas, vegas, cross-gammas, curve PV01s, credit spread sensitivities, scenario gradients, calibration Jacobians, and sometimes Hessian-vector products. Derivative-based quantities are not secondary diagnostics; they are the objects used to hedge, allocate capital, calibrate models, and optimize portfolios. This is precisely where derivative-informed neural operators become important. The DINO framework of \citet{OLearyRoseberryEtAl2022DINO} trains neural operators to approximate both a high-fidelity operator and its derivatives, formalizing derivative supervision as \emph{Sobolev training} \citep{Czarnecki2017Sobolev} lifted to operators. Recent derivative-informed FNO theory establishes simultaneous approximation of an operator and its Fr\'echet derivative, and shows that accurate surrogate-driven PDE-constrained optimization requires accurate derivatives, not only accurate state values \citep{YaoEtAl2026DIFNO,LuoEtAl2025DimReduction}. Because precomputing and storing a full derivative dataset is generally infeasible---one cannot tabulate all Greeks for all scenarios, instruments, curves, and model states---we adopt an \emph{on-the-fly} variant in which derivative actions are generated inside each mini-batch and discarded after the update, a streaming counterpart to the offline derivative datasets analyzed in that literature. This terminology follows recent scientific-computing discussions of on-the-fly derivative-informed neural-operator training, including the 2026 IPAM presentation of \citet{Mou2026IPAM}. In quantitative finance the same idea already has a precedent: differential machine learning trains pricing networks on values together with pathwise differentials produced by adjoint algorithmic differentiation \citep{HugeSavine2020Differential}.

This paper translates that idea into financial engineering, while deliberately distinguishing operator-level claims from controlled finite-dimensional surrogate experiments. The central proposal is simple:
\begin{quote}
Train financial neural operators on values and on sensitivities generated on the fly by adjoint algorithmic differentiation, tangent equations, automatic differentiation, pathwise Monte Carlo, likelihood-ratio estimators, or implicit differentiation of calibration and optimization problems.
\end{quote}

The resulting model is not merely a fast price interpolator. It is a differentiable financial engine whose local geometry is trained to agree with the local geometry of the high-fidelity system. In the experiments below, two benchmarks are finite-dimensional surrogate slices and one benchmark is a stylized function-to-function operator; the paper therefore establishes mechanism and validation discipline, not a final claim of market-scale deployment.

\subsection{Contributions}

The paper makes seven contributions.
\begin{enumerate}[leftmargin=7mm]
\item It formulates a finance-specific derivative-informed neural-operator problem in which the input is a market field and the output is a price, volatility, exposure, risk, or allocation field.
\item It introduces an on-the-fly training objective that combines value matching, Jacobian-vector products, vector-Jacobian products, instrument Greeks, factor sensitivities, and explicit economic-admissibility constraints.
\item It gives implementable algorithms for generating derivative information during training using tangent PDEs, adjoint PDEs, AAD, pathwise Monte Carlo, and implicit differentiation.
\item It proves error bounds connecting derivative error to stress stability, hedging error---in both continuous and \emph{discrete} rebalancing, the latter governed at leading order by gamma---and optimizer stability, with an explicit mean--variance specialization. It further characterizes the value-plus-derivative objective as Sobolev (i.e.\ $H^1$) training of an operator and gives the variance and sample complexity of its random-sketch estimator in terms of the stable rank of the Jacobian error.
\item It reports a reproducible, multi-seed Black--Scholes experiment with a \emph{trained} network showing that a moderate derivative penalty improves the supervised Greeks (vega strongly, delta modestly) \emph{and} value accuracy simultaneously, while the unsupervised second-order Greek is not reliably improved---a concrete motivation for supervising the order of sensitivity one actually needs.
\item It extends the experiment to Heston and Bates models with a \emph{linear} random-feature surrogate, where derivative supervision is exactly a linear least-squares augmentation, and reports multi-seed reductions of $60$--$76\%$ in stochastic-volatility and jump-parameter sensitivities together with consistent improvement of \emph{unsupervised} Greeks. The contrast with the trained network---unsupervised transfer in the linear regime, but not in the nonlinear one---is explained and tied back to the second-order hedging bound.
\item It adds a stylized but genuine function-to-function benchmark: a random-feature DeepONet/Galerkin operator maps an instantaneous-volatility curve to a dense option-price surface. Directional derivative rows reduce out-of-sample curve-shock JVP error by $44\%$ and price RMSE by $23\%$, while no-arbitrage violations remain essentially unchanged, demonstrating both the value and the limits of derivative supervision. The experiments are deliberately described as controlled benchmarks rather than as a final market-scale FNO/GNO study on live sparse quotes.
\end{enumerate}

\subsection{Guiding principle}

Let $a \in \calA$ denote the market state and let $\calG(a) \in \calU$ denote a high-fidelity financial operator. A standard neural operator minimizes
\begin{equation}
\calL_{\mathrm{value}}(\theta)
= \E_{a}\left[\norm{\wideG(a)-\calG(a)}_{\calU}^{2}\right].
\end{equation}
A derivative-informed model instead minimizes
\begin{equation}
\calL_{\mathrm{DIF}}(\theta)
= \calL_{\mathrm{value}}(\theta)
+ \lambda_{J}\,\E_{a,v}\left[\norm{D\wideG(a)[v]-D\calG(a)[v]}_{\calU}^{2}\right]
+ \lambda_{A}\,\calP_{\mathrm{arb}}(\theta),
\label{eq:dif-loss-basic}
\end{equation}
where $v$ is a random market perturbation direction and $\calP_{\mathrm{arb}}$ penalizes economically invalid outputs. Equation \eqref{eq:dif-loss-basic} is the paper's core object. It says that a financial neural operator should learn the local tangent map of the market, not only the price level.

\section{Related Literature}

\paragraph{Classical pricing and sensitivities.} The mathematical foundations of option pricing begin with the Black--Scholes--Merton framework \citep{BlackScholes1973,Merton1973}, with stochastic-volatility extensions such as \citet{Heston1993} and local-volatility inversion through \citet{Dupire1994}. In production financial systems, derivatives of prices with respect to risk factors are computed through finite differences, tangent methods, pathwise estimators, likelihood-ratio methods, or adjoint methods \citep{Glasserman2004}. The adjoint approach, popularized in finance by \citet{GilesGlasserman2006}, has become central to efficient risk computation; see \citet{Homescu2011} and the review of \citet{CapriottiGiles2024}.

\paragraph{Deep learning for pricing and hedging.} Neural networks have been used for option pricing, implied-volatility interpolation, calibration, and hedging. Deep hedging reframes hedging as a data-driven optimization problem under transaction costs and risk preferences \citep{Buehler2019DeepHedging}. Closest in spirit to the present work, \emph{differential machine learning} trains pricing networks on values together with pathwise differential labels produced by adjoint algorithmic differentiation, yielding markedly more accurate and data-efficient pricing and risk approximations \citep{HugeSavine2020Differential}; the present paper lifts this idea from pointwise pricing functions to operators over market fields and equips it with finance-specific error bounds and no-arbitrage structure. Deep volatility learning has been used to accelerate pricing and calibration under rough and stochastic volatility models \citep{HorvathMuguruzaTomas2021}. Physics-informed and deep-Galerkin methods also solve financial PDEs directly by penalizing residuals \citep{RaissiPerdikarisKarniadakis2019,SirignanoSpiliopoulos2018}. These methods are valuable, but many are still pointwise maps rather than operators over market fields.

\paragraph{Neural operators.} DeepONets and Fourier neural operators generalize neural networks to mappings between functions \citep{LuJinKarniadakis2021DeepONet,LiKovachkiEtAl2021FNO,KovachkiEtAl2023NeuralOperator}. In finance, neural-operator ideas are beginning to appear in implied-volatility smoothing, where the goal is to map sparse, dynamically located option quotes to a smooth volatility surface subject to no-arbitrage constraints \citep{GononJacquierWiedemann2024}. This paper argues that the next step is not only operator-valued pricing, but derivative-consistent operator-valued pricing.

\paragraph{Derivative-informed operator learning.} The DINO framework trains neural operators with derivative information and uses low-rank, reduced-basis derivative representations to make derivative learning feasible in high-dimensional settings \citep{OLearyRoseberryEtAl2022DINO}; this is an instance of Sobolev training \citep{Czarnecki2017Sobolev} lifted to operators. Derivative-informed FNOs add universal-approximation theory and PDE-constrained-optimization motivation \citep{YaoEtAl2026DIFNO}, and a companion analysis quantifies the approximation error of reduced-basis derivative-informed learning under an $H^1$-type norm \citep{LuoEtAl2025DimReduction}. The on-the-fly formulation we adopt is especially relevant to finance because precomputing all Greeks for all scenarios, instruments, curves, and model states is generally infeasible; it is also consistent with the emerging on-the-fly DINO agenda articulated in \citet{Mou2026IPAM}. Instead, a desk can generate only the derivative directions required by the current training batch and discard them afterward.

\section{Financial Operators}

\subsection{Market states as functions}

Let the market state be an element of a Banach or Hilbert space
\begin{equation}
 a = \big(r(\tau),q(\tau),\sigma(k,\tau),\lambda_c(\tau),\rho(x,y),\ell(x),m\big) \in \calA.
\end{equation}
Here $r$ is a discount curve, $q$ a dividend or repo curve, $\sigma$ a volatility surface, $\lambda_c$ a credit intensity curve, $\rho$ a correlation kernel, $\ell$ a liquidity or market-impact field, and $m$ a finite-dimensional vector of model conventions. Not every application uses every component. The point is that $a$ is not naturally a scalar vector; it is a collection of financial functions.

Let $\xi \in \Xi$ denote instrument coordinates, such as strike, maturity, tenor, collateral convention, counterparty, portfolio identifier, or scenario coordinate. A financial operator maps
\begin{equation}
 \calG: \calA \longrightarrow \calU, \qquad [\calG(a)](\xi) = y(a;\xi),
\end{equation}
where $y$ may represent price, implied volatility, exposure, XVA, risk contribution, or optimal allocation.

\begin{definition}[Financial neural operator]
A financial neural operator is a parameterized map $\wideG:\calA\to\calU$ trained to approximate a high-fidelity financial operator $\calG$. In discretized form, $a$ is observed on input grids or graphs, while $\wideG(a)$ is evaluated on output grids, point clouds, or instrument coordinates.
\end{definition}

\subsection{Pricing operators}

Under a risk-neutral measure $\Q^a$ determined by the market state $a$, a derivative with payoff $\Phi_{\xi}$ has value
\begin{equation}
 V(a;\xi)
 = \E^{\Q^a}\left[\exp\left(-\int_0^T r_a(s)\,ds\right)\Phi_{\xi}(X_T^a)\right].
 \label{eq:pricing-expectation}
\end{equation}
For Markovian models, the same object solves a pricing PDE
\begin{equation}
 \partial_t V(t,x;a)+\calL_a V(t,x;a)-r_a(t)V(t,x;a)=0,
 \qquad V(T,x;a)=\Phi_{\xi}(x),
 \label{eq:pricing-pde}
\end{equation}
where $\calL_a$ is the model generator. Thus the pricing map $a\mapsto V(a;\cdot)$ is an operator from market fields to price fields.

Examples include:
\begin{align}
 \calG_{\mathrm{opt}}(a)(K,T) &= C(a;K,T), \\
 \calG_{\mathrm{iv}}(a)(K,T) &= \sigma_{\mathrm{imp}}(a;K,T), \\
 \calG_{\mathrm{xva}}(a)(t,c) &= \mathrm{EE}(a;t,c),\;\mathrm{PFE}(a;t,c),\;\mathrm{CVA}(a;c), \\
 \calG_{\mathrm{risk}}(a)(s) &= \VaR_{\alpha}(a;s),\; \ES_{\alpha}(a;s).
\end{align}

\subsection{Derivative operators}

For a market perturbation $h \in \calA$, the Fr\'echet derivative of the financial operator is
\begin{equation}
 D\calG(a)[h]
 = \lim_{\eps\to0}\frac{\calG(a+\eps h)-\calG(a)}{\eps}.
\end{equation}
This derivative includes familiar financial quantities as special cases:
\begin{itemize}[leftmargin=7mm]
\item a spot perturbation gives delta;
\item a volatility perturbation gives vega or volga-like factor sensitivities;
\item a yield-curve perturbation gives PV01, DV01, key-rate duration, or curve gamma;
\item a credit-intensity perturbation gives credit spread sensitivity;
\item a correlation-kernel perturbation gives correlation risk;
\item a liquidity-field perturbation gives transaction-cost or market-impact sensitivity.
\end{itemize}

In a discretized implementation, $D\calG(a)$ is a matrix with potentially millions or billions of entries. The central computational problem is therefore not to store the full Jacobian, but to learn its action on economically meaningful directions.

\section{On-the-Fly Derivative-Informed Training}

\subsection{Value and derivative loss}

Let $a_i$ be sampled market states and let $v_{ij}$ be sampled perturbation directions. The derivative-informed objective is
\begin{align}
 \calL(\theta)
 =&\; \frac{1}{N}\sum_{i=1}^{N}\norm{\wideG(a_i)-\calG(a_i)}_{\calU}^{2} \\
 &+ \frac{\lambda_{\mathrm{JVP}}}{NM}\sum_{i=1}^{N}\sum_{j=1}^{M}
 \norm{D\wideG(a_i)[v_{ij}]-D\calG(a_i)[v_{ij}]}_{\calU}^{2} \\
 &+ \frac{\lambda_{\mathrm{VJP}}}{NL}\sum_{i=1}^{N}\sum_{\ell=1}^{L}
 \norm{D\wideG(a_i)^*[w_{i\ell}]-D\calG(a_i)^*[w_{i\ell}]}_{\calA^*}^{2} \\
 &+ \lambda_{\mathrm{arb}}\calP_{\mathrm{arb}}(\theta)
 + \lambda_{\mathrm{reg}}\norm{\theta}^{2}.
 \label{eq:full-loss}
\end{align}
The second term matches Jacobian-vector products (JVPs). The third term matches vector-Jacobian products (VJPs). In finance, JVPs are useful for scenario shocks and factor moves; VJPs are useful for adjoint risk aggregation and scalar objective gradients.

\subsection{On-the-fly generation}

The difference between offline and on-the-fly derivative-informed training is operational. An offline method first builds a derivative dataset
\begin{equation}
 \mathcal{D}_{\mathrm{offline}}
 = \{a_i,\calG(a_i),v_{ij},D\calG(a_i)[v_{ij}]\}_{i,j},
\end{equation}
which can be prohibitively large. The on-the-fly method instead generates derivative actions inside the mini-batch:
\begin{equation}
 a_i \sim \mu,\qquad v_{ij}\sim \nu(\cdot\mid a_i),\qquad
 z_{ij}=D\calG(a_i)[v_{ij}],
\end{equation}
and immediately uses $z_{ij}$ to update $\theta$. Nothing forces the same derivative direction to be stored or reused.

\paragraph{Algorithm: on-the-fly derivative-informed training.}
\begin{enumerate}[leftmargin=9mm]
\item Sample a mini-batch of market states $a_i$.
\item Evaluate the high-fidelity engine to obtain $y_i=\calG(a_i)$.
\item Sample economically meaningful directions $v_{ij}$: parallel curve shifts, key-rate bumps, volatility smile modes, skew shocks, correlation modes, credit spread shifts, liquidity shocks, or randomized PCA directions.
\item Generate derivative targets $z_{ij}=D\calG(a_i)[v_{ij}]$ using tangent equations, AAD, pathwise Monte Carlo, likelihood-ratio estimators, or implicit differentiation.
\item Compute neural predictions $\widehat y_i=\wideG(a_i)$.
\item Compute neural JVPs $\widehat z_{ij}=D\wideG(a_i)[v_{ij}]$ by automatic differentiation.
\item Update $\theta$ using \eqref{eq:full-loss}.
\end{enumerate}

The training loop therefore treats derivative information as a streaming resource, not as a static dataset.

\subsection{Tangent equations}

Suppose the high-fidelity pricing system is defined implicitly by
\begin{equation}
 F(u,a)=0, \qquad u=\calG(a).
\end{equation}
For a perturbation $v\in\calA$, the derivative state $z=D\calG(a)[v]$ satisfies the linearized equation
\begin{equation}
 F_u(u,a)z + F_a(u,a)v = 0,
 \label{eq:tangent}
\end{equation}
so that
\begin{equation}
 z = -F_u(u,a)^{-1}F_a(u,a)v.
\end{equation}
In a PDE pricing engine, \eqref{eq:tangent} is a tangent PDE. In a calibration engine, it is a linearized calibration system. In a portfolio optimizer, it is an implicit differentiation of the KKT system.

\subsection{Adjoint equations}

When the output is high-dimensional and the target is a scalar functional $\inner{w}{\calG(a)}_{\calU}$, reverse-mode adjoints are more efficient. Let $p$ solve
\begin{equation}
 F_u(u,a)^*p = w.
\end{equation}
Then
\begin{equation}
 D\calG(a)^*[w] = -F_a(u,a)^*p.
 \label{eq:adjoint-vjp}
\end{equation}
Equation \eqref{eq:adjoint-vjp} is the abstract version of adjoint risk. It is the mathematical reason AAD is valuable: one reverse sweep can produce many first-order sensitivities of a scalar portfolio value.

\subsection{Pathwise Monte Carlo derivatives}

For a Monte Carlo engine with state dynamics
\begin{equation}
 dX_t^a = b(X_t^a,a,t)\,dt + \Sigma(X_t^a,a,t)\,dW_t,
\end{equation}
pathwise sensitivity with respect to a direction $v$ follows the tangent SDE
\begin{equation}
 dZ_t = \left[b_x(X_t^a,a,t)Z_t+b_a(X_t^a,a,t)v\right]dt
 + \left[\Sigma_x(X_t^a,a,t)Z_t+\Sigma_a(X_t^a,a,t)v\right]dW_t.
\end{equation}
For a discounted payoff $D_T^a\Phi(X_T^a)$, the derivative is schematically
\begin{equation}
 D\calG(a)[v]
 = \E^{\Q^a}\left[D_aD_T^a[v]\,\Phi(X_T^a)
 + D_T^a\nabla\Phi(X_T^a)^\top Z_T\right],
\end{equation}
with additional likelihood-ratio terms if the measure itself depends on $a$ in a way not captured by the pathwise construction. In practice, AAD implements this derivative efficiently through the simulation tape.

\subsection{Random sketching of derivative information}

Full Jacobians are too large. The training objective therefore uses random projections. Let $J(a)=D\calG(a)$ and $\widehat J(a)=D\wideG(a)$. If $v$ is a zero-mean direction with covariance $C_v$, then
\begin{equation}
 \E_v\norm{(\widehat J(a)-J(a))v}_{\calU}^{2}
 = \Tr\left((\widehat J(a)-J(a))C_v(\widehat J(a)-J(a))^*\right).
\end{equation}
If $C_v$ is approximately the identity on a relevant factor subspace, the random-direction loss estimates the squared derivative error on that subspace. Choosing $C_v$ from historical PCA factors or stress modes therefore targets the sensitivities that matter economically.

\begin{proposition}[Unbiased derivative sketch]
\label{prop:sketch}
Let $J,\widehat J\in\R^{m\times n}$ and let $v\in\R^n$ satisfy $\E[vv^\top]=I_n$. Then
\begin{equation}
 \E_v\norm{(\widehat J-J)v}_2^2 = \norm{\widehat J-J}_F^2.
\end{equation}
If $\E[vv^\top]=P$ is an orthogonal projector, the same expression equals the squared Frobenius error restricted to the factor subspace $\mathrm{Range}(P)$.
\end{proposition}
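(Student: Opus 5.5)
Set $E=\widehat J-J\in\R^{m\times n}$. Because $E$ is deterministic, the claim reduces to the standard trace identity for quadratic forms in a random vector. The plan has four steps.

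\textbf{Step 1: rewrite the norm as a trace.} For each realization of $v$,
\begin{equation}
\norm{Ev}_2^2=v^\top E^\top E v=\Tr\!\left(E^\top E\,vv^\top\right),
\end{equation}
using cyclicity of the trace applied to the scalar $v^\top E^\top E v$.

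\textbf{Step 2: take expectations.} The trace is linear and $E^\top E$ is fixed, so expectation and trace commute:
\begin{equation}
\E_v\norm{Ev}_2^2=\Tr\!\left(E^\top E\,\E[vv^\top]\right).
\end{equation}
This is just the finite-dimensional version of the trace identity stated immediately before the proposition. It only needs $\E[vv^\top]$ to exist; zero mean is not required.

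\textbf{Step 3: the isotropic case.} If $\E[vv^\top]=I_n$, the right-hand side is $\Tr(E^\top E)=\sum_{i,j}E_{ij}^2=\norm{E}_F^2$, which is the first claim.

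\textbf{Step 4: the projector case.} If $\E[vv^\top]=P$ with $P=P^\top=P^2$, then
\begin{equation}
\Tr(E^\top E P)=\Tr(PE^\top EP)=\Tr\!\left((EP)^\top(EP)\right)=\norm{EP}_F^2.
\end{equation}
The first equality uses $P=P^2$ together with cyclicity. The quantity $\norm{EP}_F^2$ is the squared Frobenius norm of the Jacobian error restricted to $\mathrm{Range}(P)$. To make "restricted" concrete, take an orthonormal basis $u_1,\dots,u_k$ of $\mathrm{Range}(P)$, so that $P=UU^\top$. Then
\begin{equation}
\norm{EP}_F^2=\norm{EU}_F^2=\sum_{r=1}^{k}\norm{Eu_r}_2^2,
\end{equation}
which is the sum of squared errors along the factor directions.

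\textbf{Main obstacle.} None of these steps is computationally hard. The only delicate point is that the expectation must be finite, which holds because $\E\norm{v}_2^2=\Tr\E[vv^\top]<\infty$. The other point needing care is making the informal phrase "Frobenius error restricted to $\mathrm{Range}(P)$" precise, which is done by identifying it with $\norm{EP}_F^2$ as in Step 4.
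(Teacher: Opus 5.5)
Your proof is correct and follows the paper's argument. In both, you write $\norm{Ev}_2^2$ as a trace, use linearity to bring the expectation inside so that $E$ is traced against $\E[vv^\top]$, and then substitute $I_n$ or $P$; your Step~4 identity $\norm{EP}_F^2=\sum_r\norm{Eu_r}_2^2$ also makes precise the phrase ``Frobenius error restricted to $\mathrm{Range}(P)$,'' which the paper states without spelling out.
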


\begin{proof}
Using cyclicity of trace,
\begin{equation}
 \E\norm{(\widehat J-J)v}_2^2
 = \E\Tr\left((\widehat J-J)vv^\top(\widehat J-J)^\top\right)
 = \Tr\left((\widehat J-J)\E[vv^\top](\widehat J-J)^\top\right).
\end{equation}
The result follows by substituting $I_n$ or $P$.
\end{proof}

\paragraph{Sobolev interpretation.} Averaging Proposition~\ref{prop:sketch} over states, the population JVP loss with isotropic directions equals the mean squared Hilbert--Schmidt (Frobenius) norm of the Jacobian error,
\begin{equation}
\E_a\,\E_{v\sim\mathcal N(0,I)}\norm{(\widehat J(a)-J(a))v}_{\calU}^{2}
=\E_a\norm{\widehat J(a)-J(a)}_{F}^{2},
\end{equation}
which is the squared $H^1$-seminorm of $\wideG-\calG$ in the input measure. Value-plus-derivative training is therefore \emph{Sobolev training} of an operator \citep{Czarnecki2017Sobolev,LuoEtAl2025DimReduction}: it controls a strictly stronger norm than value-only ($L^2$) training---precisely the norm on which the stress, hedging, and optimizer bounds of Section~\ref{sec:theory-derivatives} depend.

\paragraph{How many directions?} The estimator in the on-the-fly loss uses a finite number $M$ of directions per state. Its statistical quality is governed not by the ambient dimension but by the \emph{stable rank of the Jacobian error}.

\begin{proposition}[Variance and concentration of the sketch]
\label{prop:concentration}
Let $E=\widehat J-J\in\R^{m\times n}$ have singular values $\sigma_1\ge\cdots\ge0$, and let $v_1,\dots,v_M\overset{\text{iid}}{\sim}\mathcal N(0,I_n)$. The estimator $X_M=\tfrac1M\sum_{j=1}^M\norm{Ev_j}_2^2$ satisfies $\E X_M=\norm{E}_F^2=\sum_i\sigma_i^2$ and
\begin{equation}
\Var X_M=\frac{2}{M}\sum_i\sigma_i^4=\frac{2}{M}\norm{E^\top E}_F^2,
\qquad
\frac{\sqrt{\Var X_M}}{\E X_M}=\sqrt{\frac{2}{M\,\mathrm{sr}(E)}},
\end{equation}
where $\mathrm{sr}(E):=\big(\sum_i\sigma_i^2\big)^2\big/\sum_i\sigma_i^4\in[1,\mathrm{rank}(E)]$ is the stable rank of $E$. Moreover, by the Hanson--Wright inequality \citep{Vershynin2018} there is a universal constant $c>0$ with
\begin{equation}
\Pr\big(|X_M-\norm{E}_F^2|\ge t\big)
\le 2\exp\!\Big(-c\,M\min\Big\{\tfrac{t^2}{\sigma_1^2\norm{E}_F^2},\ \tfrac{t}{\sigma_1^2}\Big\}\Big).
\end{equation}
\end{proposition}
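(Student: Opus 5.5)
The plan is to diagonalize the quadratic form by rotational invariance of the Gaussian. Take an SVD $E=U\Sigma V^\top$. Since $v_j\sim\mathcal N(0,I_n)$ and $V$ is orthogonal, $g_j:=V^\top v_j\sim\mathcal N(0,I_n)$. Hence
\begin{equation}
\norm{Ev_j}_2^2=\norm{\Sigma g_j}_2^2=\sum_i\sigma_i^2 g_{ji}^2,
\end{equation}
a weighted sum of independent $\chi^2_1$ variables. The mean $\E X_M=\sum_i\sigma_i^2=\norm{E}_F^2$ then follows at once; it is also the isotropic case of Proposition~\ref{prop:sketch}.

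For the variance, I will use independence across coordinates together with $\Var(g^2)=2$ for a standard normal $g$. This gives $\Var\norm{Ev_j}_2^2=2\sum_i\sigma_i^4$. Independence across $j$ then divides this by $M$. The identity $\sum_i\sigma_i^4=\norm{E^\top E}_F^2$ holds because $E^\top E=V\Sigma^2V^\top$ has singular values $\sigma_i^2$. To get the relative standard deviation, I divide $\sqrt{2\sum\sigma_i^4/M}$ by $\sum\sigma_i^2$ and recognize the definition of $\mathrm{sr}(E)$. The range $1\le \mathrm{sr}(E)\le\mathrm{rank}(E)$ follows from two facts. First, $(\sum\sigma_i^2)^2\ge\sum\sigma_i^4$, since the cross terms are nonnegative. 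Second, Cauchy--Schwarz over the $r=\mathrm{rank}(E)$ nonzero singular values gives $(\sum\sigma_i^2)^2\le r\sum\sigma_i^4$. I will assume $E\neq0$ so that the ratio is defined.

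For the tail bound, stack the directions into $g=(g_1,\dots,g_M)\in\R^{nM}$, a standard Gaussian vector. Then $MX_M=g^\top A g$ with $A=\mathrm{blockdiag}(\Sigma^2,\dots,\Sigma^2)$. Its Frobenius norm and operator norm are
\begin{equation}
\norm{A}_F^2=M\sum_i\sigma_i^4,\qquad \norm{A}_{op}=\sigma_1^2.
\end{equation}
Hanson--Wright \citep{Vershynin2018} then gives, for deviations $s=Mt$ of $g^\top Ag$ from its mean,
\begin{equation}
\Pr\big(|X_M-\norm{E}_F^2|\ge t\big)\le 2\exp\Big(-c\min\Big\{\tfrac{M^2t^2}{M\sum\sigma_i^4},\tfrac{Mt}{\sigma_1^2}\Big\}\Big).
\end{equation}
Finally, $\sum\sigma_i^4\le\sigma_1^2\sum\sigma_i^2=\sigma_1^2\norm{E}_F^2$ weakens the first term to the stated form $Mt^2/(\sigma_1^2\norm{E}_F^2)$.

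The main obstacle is only bookkeeping in this last step. I need to check that the $M$-scaling comes out as $M\min\{\cdot\}$, which it does because the Frobenius term carries one factor of $M$ and the deviation carries $M^2$. I also need to check that the Gaussian sub-Gaussian constant is absorbed into the universal $c$.
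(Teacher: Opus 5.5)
Your proposal is correct and follows essentially the same route as the paper: the mean and variance come from the Gaussian quadratic form $v^\top E^\top E v$, which you diagonalize explicitly by rotation invariance while the paper simply quotes $\Tr A$ and $2\Tr(A^2)$, and the tail is Hanson--Wright with $\norm{A}_{\mathrm{op}}=\sigma_1^2$ and $\sum_i\sigma_i^4\le\sigma_1^2\norm{E}_F^2$. Your block-diagonal stacking of the $M$ directions (with deviation $s=Mt$) is a genuine improvement in rigor: the paper's one-line appeal to Hanson--Wright leaves implicit how the factor $M$ appears in the exponent, and your argument derives it.
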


\begin{proof}
Write $A=E^\top E\succeq0$ with eigenvalues $\sigma_i^2$. For $v\sim\mathcal N(0,I_n)$, $\norm{Ev}_2^2=v^\top A v$ has mean $\Tr A=\sum_i\sigma_i^2$ and variance $2\Tr(A^2)=2\sum_i\sigma_i^4$; averaging $M$ independent copies divides the variance by $M$. The relative-deviation identity is algebra. The tail is the Hanson--Wright bound for the quadratic form $\tfrac1M\sum_j v_j^\top A v_j$, using $\norm{A}_F^2=\sum_i\sigma_i^4\le\sigma_1^2\norm{E}_F^2$ and $\norm{A}_{\mathrm{op}}=\sigma_1^2$.
\end{proof}

\begin{remark}[Practical reading]
To estimate a per-state derivative error to relative standard deviation $\eps$ requires $M\gtrsim 2/(\eps^2\,\mathrm{sr}(E))$ random isotropic shocks. The dependence is on the stable rank of the \emph{error}, not of the Jacobian: when the surrogate misprices only a few factor directions ($\mathrm{sr}(E)$ small), isotropic shocks are inefficient and one should instead probe those specific directions---an estimation-theoretic justification for the economically structured, PCA-and-stress direction sampler of \eqref{eq:full-loss} and Section~\ref{sec:impl}.
\end{remark}

\section{No-Arbitrage and Economic Structure}
\label{sec:noarb}

A financial surrogate must be fast, differentiable, and economically valid. For option surfaces, no-arbitrage structure imposes monotonicity, convexity, and calendar restrictions. For a call price surface $C(K,T)$ under simple assumptions, static arbitrage constraints include
\begin{equation}
 \partial_K C(K,T) \le 0, \qquad \partial_{KK}C(K,T)\ge 0,
\end{equation}
with calendar restrictions imposed on appropriately discounted or forward-normalized prices. A soft penalty can be written as
\begin{align}
 \calP_{\mathrm{arb}}(\theta)
 =&\; \E_{K,T}\left[\left(\partial_K\widehat C_\theta(K,T)\right)_+^2
 + \left(-\partial_{KK}\widehat C_\theta(K,T)\right)_+^2\right] \\
 &+ \E_{K,T}\left[\left(-\partial_T\widehat C_\theta(K,T)\right)_+^2\right]
 + \calP_{\mathrm{boundary}}(\theta),
 \label{eq:arb-penalty}
\end{align}
where $(x)_+=\max(x,0)$ and $\calP_{\mathrm{boundary}}$ enforces asymptotic and intrinsic-value bounds. Soft penalties are convenient because they can be appended to any differentiable architecture, but they are not the only or strongest route to admissibility. In implied-volatility form, one may instead impose total-variance conditions or arbitrage-free SVI/eSSVI-style restrictions \citep{GatheralJacquier2014}; in price space, one may use convex-in-strike bases, integrated positive risk-neutral densities, monotone layers, positive calendar-total-variance parameterizations, or arbitrage-free spline layers. These hard or semi-hard parameterizations are preferable whenever the desk requires structural guarantees rather than ex-post diagnostics.

Derivative-informed training and no-arbitrage constraints are complementary. The derivative loss trains the local financial geometry; the economic constraint forces that geometry to live in the admissible cone. The distinction is essential: matching tangent directions does not by itself imply monotonicity, convexity, calendar consistency, or boundary validity.

\section{Theory: Why Derivative Accuracy Matters}
\label{sec:theory-derivatives}

This section gives several elementary results. They are not meant to replace full approximation theory for neural operators. Their purpose is to clarify why derivative-informed training is financially meaningful: each result ties a downstream financial error (stress, hedging, optimization) to a derivative error of the surrogate.

\paragraph{Status of claims.} To be precise about what is and is not established here, we tier the paper's claims. \emph{Proved} (under stated assumptions): the local stress bound (Theorem~\ref{thm:stress}), the continuous- and discrete-time hedging bounds (Theorems~\ref{thm:hedge-cont}--\ref{thm:hedge-disc}), the optimizer-stability bound and its mean--variance specialization (Theorem~\ref{thm:opt}, Corollary~\ref{cor:mv}), and the sketching identity and variance/concentration of the random-direction estimator (Proposition~\ref{prop:sketch}, Proposition~\ref{prop:concentration}). \emph{Empirical and conditional}: derivative supervision improves the sensitivities it supervises in all controlled experiments reported here, but transfer to unsupervised orders is model-class-dependent: it is unreliable for gamma in the trained Black--Scholes network, reliable in the linear random-feature slices, and not a substitute for explicit no-arbitrage constraints in the operator-level surface experiment. \emph{Conjectural / out of scope}: quantitative generalization rates for derivative-informed \emph{operators} in the function-space limit, for which we defer to the approximation-error analysis of \citet{LuoEtAl2025DimReduction}.

\subsection{Stress stability}

\begin{assumption}[Local smoothness]
Let $\calG,\wideG:\calA\to\calU$ be twice Fr\'echet differentiable on a ball $B(a_0,r)$ in a Hilbert space. Assume
\begin{equation}
 \norm{D^2\calG(a)}_{\mathrm{op}}\le M, \qquad
 \norm{D^2\wideG(a)}_{\mathrm{op}}\le \widehat M
\end{equation}
for all $a\in B(a_0,r)$. Assume also that
\begin{equation}
 \norm{\wideG(a_0)-\calG(a_0)}_{\calU}\le \eps_0,
 \qquad
 \norm{D\wideG(a_0)-D\calG(a_0)}_{\mathrm{op}}\le \eps_1.
\end{equation}
\end{assumption}

\begin{theorem}[Local stress-error bound]
\label{thm:stress}
For every perturbation $h$ with $\norm{h}_{\calA}\le r$,
\begin{equation}
 \norm{\wideG(a_0+h)-\calG(a_0+h)}_{\calU}
 \le \eps_0 + \eps_1\norm{h}_{\calA}
 + \frac{M+\widehat M}{2}\norm{h}_{\calA}^{2}.
 \label{eq:stress-bound}
\end{equation}
\end{theorem}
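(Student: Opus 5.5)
The plan is to apply a second-order Taylor expansion with integral remainder to the error map $E:=\wideG-\calG$ along the segment from $a_0$ to $a_0+h$. Since $\calG$ and $\wideG$ are twice Fr\'echet differentiable on $B(a_0,r)$, so is $E$, with $D^2E=D^2\wideG-D^2\calG$. The ball is convex, so the segment $a_0+th$, $t\in[0,1]$, stays inside it whenever $\norm{h}_{\calA}\le r$. Strictly, the closed ball is needed for the endpoint $\norm{h}_{\calA}=r$; there I would either read $B(a_0,r)$ as closed or pass to the limit $\norm{h}_{\calA}\to r$ using continuity of $E$.

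The key step is the Banach-space Taylor formula
\begin{equation*}
 E(a_0+h)=E(a_0)+DE(a_0)[h]+\int_0^1(1-t)\,D^2E(a_0+th)[h,h]\,dt .
\end{equation*}
I would derive it by applying the fundamental theorem of calculus twice to $\phi(t)=E(a_0+th)$, which takes values in $\calU$. Bochner integrals are fine here, and one can alternatively pair with a unit functional $u^*\in\calU^*$ and invoke Hahn--Banach to reduce to the scalar Taylor theorem. The scalar route sidesteps any question of continuity of $D^2E$ along the segment: for a twice differentiable real function one can use the Lagrange-form remainder $\tfrac12\psi''(\tau)$ with $\tau\in(0,1)$.

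Then I take norms term by term. The zeroth-order term contributes at most $\eps_0$, and the first-order term at most $\norm{DE(a_0)}_{\mathrm{op}}\norm{h}\le\eps_1\norm{h}_{\calA}$. For the remainder, the triangle inequality on the second derivatives gives $\norm{D^2E(a)}_{\mathrm{op}}\le M+\widehat M$ on the ball, so the remainder is bounded by $(M+\widehat M)\norm{h}^2\int_0^1(1-t)\,dt=\tfrac{M+\widehat M}{2}\norm{h}_{\calA}^2$. Summing the three contributions gives \eqref{eq:stress-bound}.

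There is no serious obstacle. The only delicate point is justifying the vector-valued Taylor remainder under mere twice-differentiability. I would handle it with the Hahn--Banach scalarization: choose $u^*$ with $\norm{u^*}=1$ and $u^*(E(a_0+h))=\norm{E(a_0+h)}_{\calU}$, set $\psi(t)=u^*(E(a_0+th))$, and apply the scalar Lagrange form. This yields exactly the same bound without needing any integrability of $D^2E$.
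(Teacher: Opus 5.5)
Your proof is correct and is essentially the paper's argument: a second-order Taylor expansion whose remainder is controlled by the second-derivative bounds, combined with the triangle inequality, which you apply to $D^2E=D^2\wideG-D^2\calG$ rather than to the two remainders separately (equivalent bookkeeping). Your Hahn--Banach scalarization with the Lagrange remainder is a slightly more careful justification than the paper's appeal to the integral remainder, which tacitly needs integrability of the second derivative along the segment; for the boundary case $\norm{h}_{\calA}=r$, only your closed-ball reading is supported by the hypotheses, because continuity of $E$ outside the open ball is not assumed, so your limit-passage fallback is not justified.
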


\begin{proof}
By Taylor's theorem with integral remainder,
\begin{align}
 \calG(a_0+h)
 &= \calG(a_0)+D\calG(a_0)[h]+R(h), \\
 \wideG(a_0+h)
 &= \wideG(a_0)+D\wideG(a_0)[h]+\widehat R(h),
\end{align}
with $\norm{R(h)}\le M\norm{h}^2/2$ and $\norm{\widehat R(h)}\le \widehat M\norm{h}^2/2$. Subtracting the expansions and applying the triangle inequality gives \eqref{eq:stress-bound}.
\end{proof}

\begin{remark}
Price-only training controls $\eps_0$ but not $\eps_1$. Stress testing, scenario analysis, and calibration are local perturbation problems; their first-order error is therefore governed by derivative error.
\end{remark}

\subsection{Hedging error}

Consider a one-dimensional discounted underlying $\widetilde S_t$ under a risk-neutral measure,
\begin{equation}
 d\widetilde S_t = \widetilde S_t\sigma_t\,dW_t.
\end{equation}
Let $\widetilde V_t$ be the discounted no-arbitrage price of a payoff $\Phi(S_T)$ and let $\Delta_t=\partial_S V(t,S_t)$ be the exact delta. A surrogate hedge uses $\widehat\Delta_t=\partial_S\widehat V(t,S_t)$.

\begin{theorem}[Continuous-time delta-hedging error bound]
\label{thm:hedge-cont}
Assume square integrability of all stochastic integrals. If the hedging portfolio starts from $\widehat V_0$ and uses $\widehat\Delta_t$, then
\begin{equation}
 \left(\E^{\Q}\left[\left|\widetilde V_T-\widetilde W_T\right|^2\right]\right)^{1/2}
 \le
 |\widetilde V_0-\widetilde{\widehat V}_0|
 + \left(\E^{\Q}\int_0^T \widetilde S_t^2\sigma_t^2
 |\Delta_t-\widehat\Delta_t|^2\,dt\right)^{1/2}.
 \label{eq:hedging-bound}
\end{equation}
\end{theorem}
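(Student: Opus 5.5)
The plan is to use the martingale representation of the discounted price, write the terminal hedging error as a constant plus a single stochastic integral, and then apply Minkowski's inequality and the It\^o isometry.

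\textbf{Step 1 (representation of the target).} Under $\Q$ the discounted price $\widetilde V_t$ is a martingale. Since $\Delta_t=\partial_S V(t,S_t)$ is the exact delta, Itô's formula together with the pricing PDE \eqref{eq:pricing-pde}, in which the drift terms cancel, gives
\begin{equation*}
\widetilde V_T=\widetilde V_0+\int_0^T \Delta_t\,d\widetilde S_t .
\end{equation*}
If regularity of $V$ is an issue, one could instead invoke the martingale representation theorem and identify the integrand with $\Delta_t$. Here $\widetilde V_T$ is the discounted payoff.

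\textbf{Step 2 (the hedge).} The self-financing hedging portfolio starts from $\widehat V_0$ and holds $\widehat\Delta_t$ units of the underlying, with the rest in the numéraire. Its discounted wealth is therefore
\begin{equation*}
\widetilde W_T=\widetilde{\widehat V}_0+\int_0^T\widehat\Delta_t\,d\widetilde S_t .
\end{equation*}

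\textbf{Step 3 (error identity).} Subtracting the two representations gives
\begin{equation*}
\widetilde V_T-\widetilde W_T=(\widetilde V_0-\widetilde{\widehat V}_0)+\int_0^T(\Delta_t-\widehat\Delta_t)\,\widetilde S_t\sigma_t\,dW_t .
\end{equation*}

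\textbf{Step 4 (bound).} Apply Minkowski's inequality in $L^2(\Q)$ to this sum. The first term is deterministic, so its $L^2$ norm is its absolute value. For the second term, the It\^o isometry gives
\begin{equation*}
\E^{\Q}\Big|\int_0^T(\Delta_t-\widehat\Delta_t)\widetilde S_t\sigma_t\,dW_t\Big|^2=\E^{\Q}\int_0^T\widetilde S_t^2\sigma_t^2|\Delta_t-\widehat\Delta_t|^2\,dt .
\end{equation*}
This isometry is valid by the assumed square integrability of the stochastic integrals. Combining the two terms yields \eqref{eq:hedging-bound}.

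\textbf{Main obstacle.} The main technical point is Step 1: justifying that the martingale part of $\widetilde V$ has integrand exactly $\Delta_t\widetilde S_t\sigma_t$. With a $C^{1,2}$ price function this follows directly from Itô's formula and the PDE. Otherwise, I would argue via the martingale representation theorem in a Markovian setting, identifying the integrand through the quadratic covariation with $\widetilde S$. The square-integrability assumption in the statement is exactly what makes both stochastic integrals true $L^2$ martingales, so that Minkowski's inequality and the isometry apply.
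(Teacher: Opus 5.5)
Your proposal is correct and follows the paper's own proof: write $\widetilde V_T$ and $\widetilde W_T$ as stochastic integrals against $d\widetilde S_t$, subtract, and apply the triangle inequality in $L^2(\Q)$ followed by the It\^o isometry. Your Step 1 justification of the replication identity $\widetilde V_T=\widetilde V_0+\int_0^T\Delta_t\,d\widetilde S_t$, via It\^o's formula with the pricing PDE or via martingale representation in a Markovian setting, covers a point the paper simply asserts.
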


\begin{proof}
The discounted exact replicating value satisfies
\begin{equation}
 \widetilde V_T = \widetilde V_0 + \int_0^T \Delta_t\,d\widetilde S_t.
\end{equation}
The surrogate hedging wealth is
\begin{equation}
 \widetilde W_T = \widetilde{\widehat V}_0 + \int_0^T \widehat\Delta_t\,d\widetilde S_t.
\end{equation}
Subtract and use the triangle inequality in $L^2$, followed by the It\^o isometry.
\end{proof}

\begin{remark}
The theorem shows directly why price RMSE is not the right metric for a hedging engine. The hedge is controlled by the derivative error $\Delta-\widehat\Delta$. This is the financial analogue of the PDE-constrained-optimization observation that an accurate surrogate must have accurate derivatives.
\end{remark}

\paragraph{Discrete rebalancing and the role of gamma.} Continuous rebalancing is an idealization. In practice the surrogate delta is held constant between rebalancing times on a grid $\pi:0=t_0<\cdots<t_n=T$ with mesh $|\pi|=\max_k(t_{k+1}-t_k)$. Write $\eta(t)=t_k$ for $t\in[t_k,t_{k+1})$, and assume the surrogate value $\widehat V(t,S)$ is $C^{1,2}$ with delta $\widehat\Delta=\partial_S\widehat V$ and gamma $\widehat\Gamma=\partial_{SS}\widehat V$. The discretely rebalanced surrogate wealth is $\widetilde W^{\pi}_T=\widehat V_0+\int_0^T\widehat\Delta(\eta(t),S_{\eta(t)})\,d\widetilde S_t$.

\begin{theorem}[Discrete-rebalancing surrogate hedging error]
\label{thm:hedge-disc}
Under the square-integrability assumptions of Theorem~\ref{thm:hedge-cont},
\begin{align}
\big\|\widetilde V_T-\widetilde W^{\pi}_T\big\|_{L^2(\Q)}
&\le |\widetilde V_0-\widehat V_0| + T_2 + T_3(\pi), \label{eq:disc-hedge}\\
T_2
&:=\Big(\E^{\Q}\!\int_0^T \widetilde S_t^2\sigma_t^2
      |\Delta_t-\widehat\Delta_t|^2\,dt\Big)^{1/2},\\
T_3(\pi)
&:=\Big(\E^{\Q}\!\int_0^T \widetilde S_t^2\sigma_t^2
\big|\widehat\Delta(t,S_t)-\widehat\Delta(\eta(t),S_{\eta(t)})\big|^2\,dt\Big)^{1/2}.
\end{align}
Here $T_2$ is the first-order delta error and $T_3(\pi)$ is the discretization error induced by holding the surrogate hedge fixed between rebalancing dates.
\end{theorem}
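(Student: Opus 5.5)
The plan mirrors the proof of Theorem~\ref{thm:hedge-cont}, with one extra add-and-subtract that isolates the discretization effect. First write the exact discounted replicating value as
\[
\widetilde V_T=\widetilde V_0+\int_0^T\Delta_t\,d\widetilde S_t .
\]
Then write the discrete surrogate wealth as
\[
\widetilde W^\pi_T=\widehat V_0+\int_0^T\widehat\Delta(\eta(t),S_{\eta(t)})\,d\widetilde S_t .
\]
The integrand $\widehat\Delta(\eta(t),S_{\eta(t)})$ is adapted: it is piecewise constant and $\mathcal F_{t_k}$-measurable on $[t_k,t_{k+1})$. It is therefore a legitimate simple (or simple-like) predictable integrand, and the stochastic integral is well defined under the stated square-integrability assumptions.

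Subtracting the two representations and inserting $\pm\int_0^T\widehat\Delta(t,S_t)\,d\widetilde S_t$ gives
\[
\widetilde V_T-\widetilde W^\pi_T=(\widetilde V_0-\widehat V_0)+\int_0^T(\Delta_t-\widehat\Delta_t)\,d\widetilde S_t+\int_0^T\big(\widehat\Delta(t,S_t)-\widehat\Delta(\eta(t),S_{\eta(t)})\big)\,d\widetilde S_t .
\]
Next apply Minkowski's inequality in $L^2(\Q)$ to the three terms. The first term is deterministic, so its $L^2$ norm is $|\widetilde V_0-\widehat V_0|$. For the two stochastic integrals, use $d\widetilde S_t=\widetilde S_t\sigma_t\,dW_t$ and the It\^o isometry, which turns each $L^2$ norm into $\big(\E\int_0^T\widetilde S_t^2\sigma_t^2|\cdot|^2\,dt\big)^{1/2}$. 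These are exactly $T_2$ and $T_3(\pi)$. The result is \eqref{eq:disc-hedge}.

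The main point needing care is that the It\^o isometry applies to both integrands. Each must be progressively measurable with $\E\int_0^T\widetilde S_t^2\sigma_t^2|\cdot|^2\,dt<\infty$. For $T_2$ this is the hypothesis of Theorem~\ref{thm:hedge-cont}. For $T_3(\pi)$ it follows from square integrability of $\widehat\Delta(t,S_t)$ together with that of the frozen integrand, via $|x-y|^2\le 2|x|^2+2|y|^2$. I would add this as a one-line remark rather than a new assumption.

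If a more informative form of $T_3$ is wanted, the $C^{1,2}$ regularity of $\widehat V$ enters here. Applying It\^o's formula to $\widehat\Delta$ on $[\eta(t),t]$ gives a leading term $\widehat\Gamma\,(S_t-S_{\eta(t)})$. Its conditional second moment is of order $\widehat\Gamma^2S^2\sigma^2(t-\eta(t))$, so $T_3(\pi)=O(|\pi|^{1/2})$ with constant governed by the surrogate gamma. This refinement is not needed for the stated inequality, which is purely a triangle-inequality/isometry bound.
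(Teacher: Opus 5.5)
Your proposal is correct and follows the paper's proof essentially verbatim: exact continuous replication, adding and subtracting $\int_0^T\widehat\Delta(t,S_t)\,d\widetilde S_t$, the triangle inequality in $L^2(\Q)$, and the It\^o isometry with $d\langle\widetilde S\rangle_t=\widetilde S_t^2\sigma_t^2\,dt$. Your remarks on the predictability and integrability of the frozen integrand are sound but not needed for the inequality; the gamma refinement is stated separately in the paper as Proposition~\ref{prop:gamma-mesh}.
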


\begin{proof}
Exact continuous replication gives $\widetilde V_T=\widetilde V_0+\int_0^T\Delta_t\,d\widetilde S_t$, so
$\widetilde V_T-\widetilde W^{\pi}_T=(\widetilde V_0-\widehat V_0)+\int_0^T\big[\Delta_t-\widehat\Delta(\eta(t),S_{\eta(t)})\big]\,d\widetilde S_t$.
Split the integrand as $[\Delta_t-\widehat\Delta(t,S_t)]+[\widehat\Delta(t,S_t)-\widehat\Delta(\eta(t),S_{\eta(t)})]$, take $L^2(\Q)$ norms, apply the triangle inequality, and use the It\^o isometry with $d\langle\widetilde S\rangle_t=\widetilde S_t^2\sigma_t^2\,dt$.
\end{proof}

\begin{proposition}[Gamma controls the discretization term]
\label{prop:gamma-mesh}
Suppose, in addition, that $\widehat\Gamma$ and $\partial_t\widehat\Delta$ are bounded on the trading domain and that $\sup_t\E^{\Q}[\widetilde S_t^4\sigma_t^4]<\infty$. Then there is a constant $C$, depending on $T$ and on moment bounds of $(\widetilde S,\sigma)$ and growing \emph{linearly} in the uniform bounds $\|\widehat\Gamma\|_\infty$ and $\|\partial_t\widehat\Delta\|_\infty$, such that $T_3(\pi)\le C\,|\pi|^{1/2}$.
\end{proposition}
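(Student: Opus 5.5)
We bound the hedge increment $\widehat\Delta(t,S_t)-\widehat\Delta(\eta(t),S_{\eta(t)})$ pointwise via the mean value theorem in $(t,S)$, then integrate against the weight $\widetilde S_t^2\sigma_t^2$ using Cauchy--Schwarz and the fourth-moment assumption. For $t\in[t_k,t_{k+1})$, write the increment as a time part plus a space part:
\begin{equation*}
\big[\widehat\Delta(t,S_t)-\widehat\Delta(t_k,S_t)\big]+\big[\widehat\Delta(t_k,S_t)-\widehat\Delta(t_k,S_{t_k})\big].
\end{equation*}
The mean value theorem, with $\partial_S\widehat\Delta=\widehat\Gamma$, gives
\begin{equation*}
\big|\widehat\Delta(t,S_t)-\widehat\Delta(\eta(t),S_{\eta(t)})\big|\le \|\partial_t\widehat\Delta\|_\infty\,(t-\eta(t))+\|\widehat\Gamma\|_\infty\,|S_t-S_{\eta(t)}|.
\end{equation*}
This is the only place the surrogate's regularity enters. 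The dependence on $\|\widehat\Gamma\|_\infty$ and $\|\partial_t\widehat\Delta\|_\infty$ is linear at the level of $T_3$ because we take square roots at the end.

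Next, square the bound using $(x+y)^2\le 2x^2+2y^2$, multiply by $\widetilde S_t^2\sigma_t^2$, and take expectations. The time part is at most $2\|\partial_t\widehat\Delta\|_\infty^2|\pi|^2\,\E[\widetilde S_t^2\sigma_t^2]$. Its integral over $[0,T]$ is $O(|\pi|^2)\le O(T|\pi|)$ once $|\pi|\le T$, which is always true. For the space part, apply Cauchy--Schwarz:
\begin{equation*}
\E\big[\widetilde S_t^2\sigma_t^2|S_t-S_{\eta(t)}|^2\big]\le \big(\E[\widetilde S_t^4\sigma_t^4]\big)^{1/2}\big(\E|S_t-S_{\eta(t)}|^4\big)^{1/2}.
\end{equation*}
The first factor is uniformly bounded by assumption. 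It remains to show $\E|S_t-S_{\eta(t)}|^4\le C'(t-\eta(t))^2$. Integrating in $t$ then gives $\int_0^T C''(t-\eta(t))\,dt\le C''T|\pi|$. Taking square roots yields $T_3(\pi)\le C|\pi|^{1/2}$, with $C$ linear in the two sup-norms.

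The main obstacle is the fourth-moment increment estimate for $S$, which is undiscounted, while the dynamics are stated for the discounted $\widetilde S$. I would first treat the discounted-coordinate case, where $\widetilde S_t-\widetilde S_{t_k}=\int_{t_k}^t\widetilde S_u\sigma_u\,dW_u$. The Burkholder--Davis--Gundy inequality with exponent $4$ gives
\begin{equation*}
\E|\widetilde S_t-\widetilde S_{t_k}|^4\le C_4\,\E\Big(\int_{t_k}^t\widetilde S_u^2\sigma_u^2\,du\Big)^2\le C_4 (t-t_k)\int_{t_k}^t\E[\widetilde S_u^4\sigma_u^4]\,du\le C_4 K(t-t_k)^2,
\end{equation*}
where the middle step is Jensen/Cauchy--Schwarz and $K=\sup_u\E[\widetilde S_u^4\sigma_u^4]$. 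Passing to $S_t=e^{\int_0^t r}\widetilde S_t$ adds a drift term. With a bounded rate that term is $O((t-t_k)^4)$ plus a term involving $\E[\widetilde S^4]$, which I would assume bounded or absorb into the "moment bounds of $(\widetilde S,\sigma)$" in the statement. If the interest-rate step turns out to be awkward, I would state the hedge in discounted coordinates, so that $\widehat\Delta$ is a function of $\widetilde S$ and the drift issue disappears.
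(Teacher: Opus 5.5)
Your proof is correct, and it takes a genuinely different route from the paper's.

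The paper applies It\^o's formula to $s\mapsto\widehat\Delta(s,S_s)$ on $[\eta(t),t]$. It treats the martingale part $\int_{\eta(t)}^t\widehat\Gamma\,dS_s$ as the leading term and bounds it with the It\^o isometry. You never expand $\widehat\Delta$ along the path. Instead you use the mean value theorem in $(t,S)$ to reduce everything to the increment $S_t-S_{\eta(t)}$, and you control that increment in $L^4$ by Burkholder--Davis--Gundy.

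Your route needs only the stated hypotheses. It\^o's formula for $\widehat\Delta$ tacitly requires $\widehat\Delta\in C^{1,2}$, i.e.\ a third spatial derivative of $\widehat V$. The resulting $ds$ term also involves $\partial_S\widehat\Gamma$, which is not assumed bounded.

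More importantly, your estimate $\E|S_t-S_{\eta(t)}|^4\le C'(t-\eta(t))^2$ is exactly what the final Cauchy--Schwarz step against the weight $\widetilde S_t^2\sigma_t^2$ needs. The paper's sketch derives only a second-moment bound on the hedge increment before invoking Cauchy--Schwarz. Your argument therefore supplies a step the sketch leaves implicit; on the paper's route one would also need BDG for $\int\widehat\Gamma\,dS$.

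Your treatment of discounted versus undiscounted spot, which the paper passes over, is also appropriate. Absorbing $\sup_t\E[\widetilde S_t^4]$ into the permitted moment constants, or writing the hedge in discounted coordinates, resolves it.

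What the paper's decomposition buys in return is interpretation rather than rigor. It exhibits the leading error as the surrogate's own gamma P\&L $\int\widehat\Gamma\,dS$, which motivates the subsequent remark that $\widehat\Gamma$ should track $\Gamma$. Your Lipschitz bound delivers the same $|\pi|^{1/2}$ rate with a constant linear in $\|\widehat\Gamma\|_\infty$ and $\|\partial_t\widehat\Delta\|_\infty$, which is all the proposition asserts.
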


\begin{proof}[Proof sketch]
Fix $t$ and apply It\^o's formula to $s\mapsto\widehat\Delta(s,S_s)$ on $[\eta(t),t]$:
$\widehat\Delta(t,S_t)-\widehat\Delta(\eta(t),S_{\eta(t)})=\int_{\eta(t)}^t\!\big(\partial_s\widehat\Delta+\tfrac12\widehat\Gamma\,S_s^2\sigma_s^2\big)ds+\int_{\eta(t)}^t\!\widehat\Gamma\,dS_s$.
The martingale term dominates: by the It\^o isometry its second moment is $\E\int_{\eta(t)}^t\widehat\Gamma^2 S_s^2\sigma_s^2\,ds=O\!\big((t-\eta(t))\,\|\widehat\Gamma\|_\infty^2\big)$, while the drift term contributes $O\big((t-\eta(t))^2\big)$. Hence $\E\,|\widehat\Delta(t,S_t)-\widehat\Delta(\eta(t),S_{\eta(t)})|^2=O(|\pi|\,\|\widehat\Gamma\|_\infty^2)$ uniformly in $t$. Substituting into $T_3(\pi)^2=\E\int_0^T\widetilde S_t^2\sigma_t^2|\cdots|^2dt$ and using Cauchy--Schwarz with the fourth-moment bound yields $T_3(\pi)^2\le C^2|\pi|$.
\end{proof}

\begin{remark}[Why gamma must be supervised]
Equation~\eqref{eq:disc-hedge} separates two failure modes. The delta error term vanishes only if the surrogate's first derivative is accurate; the discretization term $T_3$ is, by Proposition~\ref{prop:gamma-mesh}, governed at leading order by the surrogate's gamma. The classical ``gamma P\&L'' of discrete hedging has the same structure, with magnitude set by the \emph{true} gamma $\Gamma$; for a surrogate hedge to reproduce the true discrete-hedging risk (not merely the continuous-limit hedge), $\widehat\Gamma$ must track $\Gamma$. First-order (delta, vega) supervision alone does not constrain $\widehat\Gamma$---a point confirmed empirically in Section~\ref{sec:experiment}, where the held-out gamma is not reliably improved---which is the precise motivation for the second-order objective of Section~\ref{sec:second-order}.
\end{remark}

\subsection{Optimizer stability}

Many financial workflows solve an outer optimization problem using surrogate outputs. Examples include calibration, portfolio allocation, collateral optimization, and execution scheduling. Let
\begin{equation}
 x^*(a) = \argmin_{x\in\calC} J(x,a),
 \qquad
 \widehat x^*(a) = \argmin_{x\in\calC} \widehat J(x,a),
\end{equation}
where $J$ depends on $\calG(a)$ and $\widehat J$ depends on $\wideG(a)$.

\begin{theorem}[Strongly convex optimizer perturbation]
\label{thm:opt}
Suppose $J(\cdot,a)$ is $\mu$-strongly convex and differentiable on a convex set $\calC$, and $\widehat J(\cdot,a)$ is convex and differentiable. If
\begin{equation}
 \sup_{x\in\calC}\norm{\nabla_x\widehat J(x,a)-\nabla_x J(x,a)}_2\le \eta,
\end{equation}
then
\begin{equation}
 \norm{\widehat x^*(a)-x^*(a)}_2\le \frac{\eta}{\mu}.
 \label{eq:optimizer-stability}
\end{equation}
\end{theorem}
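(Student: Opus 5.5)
The plan is to combine the first-order optimality (variational inequality) conditions of the two problems with the strong monotonicity of $\nabla_x J(\cdot,a)$. Write $x^*=x^*(a)$ and $\widehat x=\widehat x^*(a)$, and assume both minimizers exist. Existence of $x^*$ follows from strong convexity when $\calC$ is closed; for $\widehat x$ we simply take existence as part of the hypothesis. Since both objectives are convex and differentiable on the convex set $\calC$, the minimizers are characterized by
\begin{equation}
 \inner{\nabla_x J(x^*,a)}{x-x^*}\ge 0,\qquad
 \inner{\nabla_x \widehat J(\widehat x,a)}{x-\widehat x}\ge 0
 \qquad \forall x\in\calC.
\end{equation}
We will test the first inequality at $x=\widehat x$ and the second at $x=x^*$, then add the two.

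Adding the tested inequalities gives
\begin{equation}
 \inner{\nabla_x J(x^*,a)-\nabla_x\widehat J(\widehat x,a)}{\widehat x-x^*}\ge 0 .
\end{equation}
We then insert and subtract $\nabla_x J(\widehat x,a)$ inside the first slot. Strong convexity of $J$ implies strong monotonicity of its gradient:
\begin{equation}
 \inner{\nabla_x J(\widehat x,a)-\nabla_x J(x^*,a)}{\widehat x-x^*}\ge\mu\norm{\widehat x-x^*}_2^2 .
\end{equation}
Combining the two displays yields
\begin{equation}
 \mu\norm{\widehat x-x^*}_2^2\le\inner{\nabla_x J(\widehat x,a)-\nabla_x\widehat J(\widehat x,a)}{\widehat x-x^*}.
\end{equation}

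The right-hand side is bounded by Cauchy--Schwarz together with the uniform gradient-error hypothesis evaluated at the single point $\widehat x\in\calC$. This gives $\mu\norm{\widehat x-x^*}_2^2\le\eta\norm{\widehat x-x^*}_2$. If $\widehat x=x^*$ there is nothing to prove; otherwise we divide by $\norm{\widehat x-x^*}_2$ and obtain \eqref{eq:optimizer-stability}.

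The only delicate step is the sign bookkeeping when the two variational inequalities are added and the strong-monotonicity term is inserted, and I expect that to be the main place an error could creep in. Apart from that, the argument uses only convexity of $\widehat J$, to justify its optimality condition. Uniqueness of $\widehat x$ is not needed: the bound holds for any minimizer of $\widehat J$.
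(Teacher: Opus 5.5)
Your proposal is correct and matches the paper's proof essentially step for step: test each variational inequality at the other problem's minimizer and add, insert $\nabla_x J(\widehat x,a)$, then apply strong monotonicity on the left and Cauchy--Schwarz with the uniform gradient bound on the right before dividing. Your sign bookkeeping checks out. Your explicit handling of existence of the minimizers (the paper simply takes $\calC$ closed) is a harmless refinement.
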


\begin{proof}
The first-order optimality conditions on the closed convex set $\calC$ are the variational inequalities
\begin{equation}
\langle \nabla J(x^*),x-x^*\rangle\ge0,
\qquad
\langle \nabla \widehat J(\widehat x^*),x-\widehat x^*\rangle\ge0,
\qquad x\in\calC .
\end{equation}
Taking $x=\widehat x^*$ in the first inequality and $x=x^*$ in the second, adding, and rearranging gives
\begin{equation}
\langle \nabla J(\widehat x^*)-\nabla J(x^*),\widehat x^*-x^*\rangle
\le
\langle \nabla J(\widehat x^*)-\nabla\widehat J(\widehat x^*),\widehat x^*-x^*\rangle .
\end{equation}
Strong convexity implies the left-hand side is at least $\mu\norm{\widehat x^*-x^*}_2^2$, while the right-hand side is at most $\eta\norm{\widehat x^*-x^*}_2$. Dividing by $\norm{\widehat x^*-x^*}_2$ gives \eqref{eq:optimizer-stability}; the zero-distance case is trivial. This is the standard monotone-operator proof of stability for strongly convex programs \citep{NocedalWright2006}.
\end{proof}

If the optimization gradient $\nabla_x J$ depends on $D\calG(a)$, then derivative-informed training reduces $\eta$ directly. This is why derivative learning matters for calibration, hedging, and portfolio control.

\begin{corollary}[Mean--variance weights]
\label{cor:mv}
Let the allocation solve $w^*(a)=\argmin_{w\in\calC}\{\tfrac12 w^\top\Sigma(a)w-\mu(a)^\top w\}$ over a convex set $\calC$, with $\Sigma(a)\succ0$, and let $\widehat w^*(a)$ be the same problem with the surrogate covariance $\widehat\Sigma(a)$ and mean $\widehat\mu(a)$. Then
\begin{equation}
\norm{\widehat w^*(a)-w^*(a)}_2
\le \frac{\norm{\widehat\Sigma(a)-\Sigma(a)}_{\mathrm{op}}\,\norm{w^*(a)}_2+\norm{\widehat\mu(a)-\mu(a)}_2}{\lambda_{\min}\!\big(\Sigma(a)\big)}.
\label{eq:mv-bound}
\end{equation}
\end{corollary}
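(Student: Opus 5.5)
The plan is to derive the corollary from Theorem~\ref{thm:opt}, with $J(w,a)=\tfrac12 w^\top\Sigma(a)w-\mu(a)^\top w$ and $\widehat J(w,a)=\tfrac12 w^\top\widehat\Sigma(a)w-\widehat\mu(a)^\top w$. Since $\Sigma(a)\succ0$, $J(\cdot,a)$ is $\mu_{\mathrm{sc}}$-strongly convex with $\mu_{\mathrm{sc}}=\lambda_{\min}(\Sigma(a))$. For $\widehat J$ we need convexity, which holds whenever $\widehat\Sigma(a)\succeq0$; I will add this as an implicit standing assumption, since it is needed for the surrogate problem to be a convex program at all. The gradient gap is
\begin{equation}
\nabla_w\widehat J(w,a)-\nabla_w J(w,a)=(\widehat\Sigma(a)-\Sigma(a))w-(\widehat\mu(a)-\mu(a)).
\end{equation}

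The difficulty is that the theorem asks for $\eta$ to bound this gap uniformly over $\calC$. That supremum grows with $\norm{w}_2$ and may be infinite on an unbounded set. The corollary, however, only has $\norm{w^*(a)}_2$ on the right-hand side. So I would not invoke the theorem as a black box. Instead I will rerun its variational-inequality argument and evaluate the gradient gap at $w^*$ rather than at $\widehat w^*$.

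To do this, take the optimality conditions $\langle\nabla J(w^*),w-w^*\rangle\ge0$ and $\langle\nabla\widehat J(\widehat w^*),w-\widehat w^*\rangle\ge0$. Set $w=\widehat w^*$ in the first, set $w=w^*$ in the second, and add. Then split the sum as
\begin{equation}
\langle\nabla\widehat J(\widehat w^*)-\nabla\widehat J(w^*),\widehat w^*-w^*\rangle+\langle\nabla\widehat J(w^*)-\nabla J(w^*),\widehat w^*-w^*\rangle\le0 .
\end{equation}
The first term equals $(\widehat w^*-w^*)^\top\widehat\Sigma(\widehat w^*-w^*)$. This step is the main obstacle. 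The clean bound wants this term to be at least $\lambda_{\min}(\Sigma)\norm{\widehat w^*-w^*}^2$, which needs $\widehat\Sigma\succeq\Sigma$. Without that, the natural constant is $\lambda_{\min}(\widehat\Sigma)$ in place of $\lambda_{\min}(\Sigma)$.

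To reconcile this with the stated denominator, I would use the symmetric version of the argument with the roles of the two problems swapped. This gives strong monotonicity of $\nabla J$ with constant $\lambda_{\min}(\Sigma)$, but the gap is then evaluated at $\widehat w^*$:
\begin{equation}
\lambda_{\min}(\Sigma)\norm{\widehat w^*-w^*}\le\norm{\widehat\Sigma-\Sigma}_{\mathrm{op}}\norm{\widehat w^*}+\norm{\widehat\mu-\mu}.
\end{equation}
The two versions trade off which matrix's curvature is used against which solution's norm appears. I would first try to close the gap between them with $\norm{\widehat w^*}\le\norm{w^*}+\norm{\widehat w^*-w^*}$, which is only helpful when $\norm{\widehat\Sigma-\Sigma}_{\mathrm{op}}<\lambda_{\min}(\Sigma)$. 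If that gives only $(\lambda_{\min}(\Sigma)-\norm{\widehat\Sigma-\Sigma}_{\mathrm{op}})^{-1}$ in place of $\lambda_{\min}(\Sigma)^{-1}$, I will make the statement exact by one of two routes. The first is to assume $\widehat\Sigma(a)\succeq\Sigma(a)$. The second is to read the bound as the first-order statement it is, valid at leading order in the covariance error.

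In the favourable case the assembly is short. Cauchy--Schwarz bounds the second term of the split by $(\norm{\widehat\Sigma-\Sigma}_{\mathrm{op}}\norm{w^*}_2+\norm{\widehat\mu-\mu}_2)\norm{\widehat w^*-w^*}_2$. Dividing by $\norm{\widehat w^*-w^*}_2$ gives \eqref{eq:mv-bound}, and the case $\widehat w^*=w^*$ is trivial.
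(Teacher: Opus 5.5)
Your caution is justified: the corollary is false as stated, so no proof can reach it without an extra hypothesis. Take $n=1$, $\calC=\R$ (or $\calC=[-3,3]$), $\Sigma=\mu=\widehat\mu=1$ and $\widehat\Sigma=\tfrac12$. Then $w^*=1$ and $\widehat w^*=2$, so the left side of \eqref{eq:mv-bound} is $1$ while the right side is $\tfrac12$.

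In the unconstrained case, $\widehat\Sigma(\widehat w^*-w^*)=(\widehat\mu-\mu)-(\widehat\Sigma-\Sigma)w^*$. So the natural denominator is $\lambda_{\min}(\widehat\Sigma)$, which is exactly what your first split produces.

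The paper's proof makes the move you declined to make. It defines $\eta$ as the gradient gap at $w^*$ only and then applies Theorem~\ref{thm:opt}. That theorem's hypothesis is a uniform bound over $\calC$, and its proof uses the gap at $\widehat x^*$. Your symmetric version, with $\norm{\widehat w^*}_2$ in the numerator, is what the theorem actually delivers here.

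So no refinement of the triangle-inequality step will recover $\lambda_{\min}(\Sigma)^{-1}$; commit to a corrected statement instead.

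Your first split proves the bound with $\lambda_{\min}(\widehat\Sigma(a))$ in the denominator, assuming only $\widehat\Sigma\succ0$. By Weyl's inequality, $\lambda_{\min}(\widehat\Sigma)\ge\lambda_{\min}(\Sigma)-\norm{\widehat\Sigma-\Sigma}_{\mathrm{op}}$, so this bound implies your $(\lambda_{\min}(\Sigma)-\norm{\widehat\Sigma-\Sigma}_{\mathrm{op}})^{-1}$ variant.

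The original denominator $\lambda_{\min}(\Sigma)$ holds under $\lambda_{\min}(\widehat\Sigma)\ge\lambda_{\min}(\Sigma)$. That condition is weaker than the $\widehat\Sigma\succeq\Sigma$ you proposed.

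The ``leading order'' reading is acceptable only if you say explicitly that you are changing the statement, because the corollary carries no such qualifier.

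Finally, for $\widehat w^*$ to exist you should also assume $\calC$ is closed and either $\widehat\Sigma\succ0$ or $\calC$ is compact.
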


\begin{proof}
The objective $J(\cdot,a)$ is $\lambda_{\min}(\Sigma(a))$-strongly convex. Its gradient is $\nabla_w J(w,a)=\Sigma(a)w-\mu(a)$, so at $w^*$,
$\norm{\nabla_w\widehat J(w^*,a)-\nabla_w J(w^*,a)}_2=\norm{(\widehat\Sigma-\Sigma)w^*-(\widehat\mu-\mu)}_2\le\norm{\widehat\Sigma-\Sigma}_{\mathrm{op}}\norm{w^*}_2+\norm{\widehat\mu-\mu}_2=:\eta$.
Apply Theorem~\ref{thm:opt} with $\mu=\lambda_{\min}(\Sigma(a))$.
\end{proof}

\begin{remark}
Since $\Sigma(a)$ and $\mu(a)$ are themselves outputs (and directional derivatives) of the market-state operator, the numerator of \eqref{eq:mv-bound} is exactly the kind of value/derivative error that derivative-informed training reduces. The denominator $\lambda_{\min}(\Sigma(a))$ is the familiar source of mean--variance instability: when the covariance surrogate has a small smallest eigenvalue, surrogate errors are amplified into large weight swings. A derivative-stable covariance operator therefore directly buys allocation stability.
\end{remark}

\begin{theorem}[Local calibration-map sensitivity]
\label{thm:calibration}
Let $q\in\R^m$ denote market quotes and let $\nu^*(q)$ solve the regularized calibration problem
\begin{equation}
 \nu^*(q)=\argmin_{\nu\in\R^p}
 \frac12\norm{P(\nu)-q}_W^2+R(\nu),
\end{equation}
where $P$ is a model-pricing map, $W\succ0$, and $R$ is twice differentiable. Suppose that at $(\nu^*,q)$ the calibration Hessian
\begin{equation}
H_q=DP(\nu^*)^\top WDP(\nu^*)+
\sum_{i=1}^{m}(P_i(\nu^*)-q_i)\nabla^2 P_i(\nu^*)+\nabla^2R(\nu^*)
\end{equation}
is nonsingular with $\norm{H_q^{-1}}_{\mathrm{op}}\le\kappa_H$. Then, locally around $(q,
u^*)$, the quote sensitivity of the calibrated parameter is
\begin{equation}
D\nu^*(q)[h]=H_q^{-1}DP(\nu^*)^\top W h .
\label{eq:calibration-sens}
\end{equation}
Consequently, in the same local neighborhood, if a surrogate pricing map $\widehat P$ induces errors $\delta_H=\norm{\widehat H_q-H_q}_{\mathrm{op}}$ and $\delta_J=\norm{D\widehat P(\widehat\nu^*)-DP(\nu^*)}_{\mathrm{op}}$, and $\kappa_H\delta_H<1$, then the corresponding calibration-sensitivity error is bounded to first order by
\begin{equation}
\norm{D\widehat\nu^*(q)-D\nu^*(q)}_{\mathrm{op}}
\lesssim
\frac{\kappa_H}{1-\kappa_H\delta_H}
\left(\norm{W}\,\delta_J+\kappa_H\norm{DP^\top W}\,\delta_H\right),
\label{eq:calibration-bound}
\end{equation}
up to evaluation-point terms proportional to $\norm{\widehat\nu^*-\nu^*}$.
\end{theorem}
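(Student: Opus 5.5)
The sensitivity formula \eqref{eq:calibration-sens} follows from the implicit function theorem applied to the first-order optimality condition; the error bound follows from a resolvent-type perturbation argument for $H_q^{-1}$.

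\textbf{Step 1: optimality condition.} Define the map
\begin{equation}
\Psi(\nu,q)=DP(\nu)^\top W\big(P(\nu)-q\big)+\nabla R(\nu).
\end{equation}
The calibrated parameter satisfies $\Psi(\nu^*(q),q)=0$. Differentiating in $\nu$ gives $D_\nu\Psi=DP^\top W DP+\sum_i \big[W(P-q)\big]_i\nabla^2P_i+\nabla^2R$, which is the stated $H_q$. (For non-diagonal $W$, the residual weights are $W(P-q)$; I would note this and read the statement's sum accordingly.) Differentiating in $q$ gives $D_q\Psi=-DP^\top W$.

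\textbf{Step 2: implicit function theorem.} Since $H_q$ is nonsingular, the implicit function theorem yields a locally unique $C^1$ branch $\nu^*(q)$. Differentiating $\Psi(\nu^*(q),q)=0$ gives $H_q\,D\nu^*[h]-DP^\top W h=0$, which is exactly \eqref{eq:calibration-sens}. This step requires $P\in C^2$ and $R\in C^2$ near $\nu^*$.

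\textbf{Step 3: perturbation of the inverse.} For the surrogate, the same argument gives $D\widehat\nu^*=\widehat H_q^{-1}B_2$ with $B_2=D\widehat P(\widehat\nu^*)^\top W$, while $B_1=DP(\nu^*)^\top W$. Write $\widehat H_q=H_q(I+H_q^{-1}\Delta)$ with $\Delta=\widehat H_q-H_q$. Because $\|H_q^{-1}\Delta\|\le\kappa_H\delta_H<1$, a Neumann series shows
\begin{equation}
\|\widehat H_q^{-1}\|\le\frac{\kappa_H}{1-\kappa_H\delta_H},
\qquad
\widehat H_q^{-1}-H_q^{-1}=-\widehat H_q^{-1}\Delta H_q^{-1}.
\end{equation}
Then split the error as
\begin{equation}
\widehat H_q^{-1}B_2-H_q^{-1}B_1=\widehat H_q^{-1}(B_2-B_1)+(\widehat H_q^{-1}-H_q^{-1})B_1 .
\end{equation}
The first term is bounded by $\frac{\kappa_H}{1-\kappa_H\delta_H}\|W\|\delta_J$. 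The second is bounded by $\frac{\kappa_H}{1-\kappa_H\delta_H}\delta_H\kappa_H\|DP^\top W\|$. Adding the two gives \eqref{eq:calibration-bound}.

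\textbf{Main obstacle.} The hard part is making the surrogate branch legitimate. One must show $\widehat\nu^*$ exists near $\nu^*$ and that $\widehat H_q$, evaluated at $\widehat\nu^*$, stays close to $H_q$. The $\delta_H$ and $\delta_J$ definitions mix evaluation points, so discrepancies from evaluating at $\widehat\nu^*$ versus $\nu^*$ must be absorbed. I would handle this by invoking a Lipschitz bound on $DP$ and $\nabla^2P$ near $\nu^*$, which turns those discrepancies into the stated $O(\|\widehat\nu^*-\nu^*\|)$ terms hidden in ``$\lesssim$''. For closeness of $\widehat\nu^*$ itself, I would use an argument in the spirit of Theorem~\ref{thm:opt}, with local strong monotonicity of $\Psi$ supplied by the invertibility of $H_q$.
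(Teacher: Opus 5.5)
Your proposal is correct and follows essentially the paper's proof: you differentiate the first-order condition $DP(\nu)^\top W(P(\nu)-q)+\nabla R(\nu)=0$ in $q$, adding the implicit-function-theorem justification that the paper leaves implicit, and then bound $\widehat H_q^{-1}B_2-H_q^{-1}B_1$ using the resolvent identity together with the Neumann bound $\norm{\widehat H_q^{-1}}\le\kappa_H/(1-\kappa_H\delta_H)$. Your side remark on the curvature term is also right, but the weights $[W(P-q)]_i$ must replace $(P_i-q_i)$ whenever $W\neq I$, including diagonal $W$, not only when $W$ is non-diagonal.
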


\begin{proof}
Differentiate the first-order condition $DP(\nu)^\top W(P(\nu)-q)+\nabla R(\nu)=0$ with respect to $q$ to obtain \eqref{eq:calibration-sens}. The perturbation bound follows from the resolvent identity
$(H+\Delta H)^{-1}-H^{-1}=-(H+\Delta H)^{-1}\Delta H H^{-1}$ and the Neumann bound $\norm{(H+\Delta H)^{-1}}\le \kappa_H/(1-\kappa_H\norm{\Delta H})$.
\end{proof}

\begin{remark}
Calibration is where price-only surrogates often fail silently. A model may interpolate quoted prices while producing an unstable calibration Jacobian, which then contaminates smile shocks, bucket vegas, and hedging re-calibrations. Theorem~\ref{thm:calibration} makes the dependency explicit in the local regular regime: stability of the calibration map depends on the pricing Jacobian and Hessian, not only on price levels. It should not be read as a global statement across multiple calibration minima, parameter-bound hits, or regime switches.
\end{remark}

\section{Financial Architectures}

\subsection{FNO-style market-field encoder}

When input and output live on regular grids, an FNO-style model is natural. A volatility-surface pricing operator can be represented as
\begin{equation}
 \sigma(k,T) \longmapsto C(K,T),
\end{equation}
where both input and output are functions on two-dimensional grids. A Fourier layer has the form
\begin{equation}
 h_{\ell+1}(x)
 = \sigma_{\mathrm{act}}\left(W_{\ell}h_{\ell}(x)
 + \mathcal{F}^{-1}\left(R_{\ell}\cdot \mathcal{F}h_{\ell}\right)(x)
 \right),
\end{equation}
where $R_{\ell}$ acts on a truncated set of Fourier modes. The derivative-informed loss is applied to the output surface and to its JVPs with respect to market-field perturbations.

\subsection{DeepONet-style instrument evaluation}

When the output is queried at arbitrary instrument coordinates, a DeepONet-style representation is convenient:
\begin{equation}
 \wideG(a)(\xi)
 = \sum_{r=1}^{p} b_r(a)t_r(\xi),
\end{equation}
where the branch network encodes the market state and the trunk network encodes instrument coordinates. This is appropriate for irregular option chains, OTC portfolios, and XVA netting sets.

\subsection{Graph neural operators for sparse quotes}

Market data are often irregular: option quotes exist at changing strikes and maturities, bonds have issue-specific maturities, and CDS curves have missing tenors. Graph neural operators or attention-based set operators can encode the observed quote cloud and output a smooth field. This is particularly relevant to implied-volatility nowcasting and smoothing, where the input quote configuration changes intraday \citep{GononJacquierWiedemann2024}.

\section{Applications}

\subsection{Option surface pricing and Greeks}

Let
\begin{equation}
 a=(r(\tau),q(\tau),\sigma_{\mathrm{loc}}(S,t),\nu)
\end{equation}
encode rates, dividends, local volatility, and finite-dimensional model parameters. The operator is
\begin{equation}
 \calG(a)(K,T) = C(a;K,T).
\end{equation}
A derivative-informed operator learns
\begin{equation}
 C,\quad \partial_S C,\quad \partial_{SS}C,\quad D_{r}C[h_r],\quad D_{\sigma}C[h_{\sigma}],\quad D_{\nu}C[h_{\nu}].
\end{equation}
The directions $h_{\sigma}$ can be smile level, skew, curvature, term-structure, or PCA factors. The directions $h_r$ can be parallel, slope, curvature, and key-rate shifts.

\subsection{Calibration as inverse operator learning}

Calibration solves
\begin{equation}
 \nu^*(q) = \argmin_{\nu}\sum_{m=1}^{M} \omega_m
 \left(C_{\mathrm{model}}(\nu;\xi_m)-q_m\right)^2 + \Omega(\nu),
\end{equation}
where $q_m$ are market quotes. A calibration operator maps $q\mapsto\nu^*(q)$ or $q\mapsto\sigma_{\mathrm{imp}}(\cdot)$. Differentiating the first-order condition gives
\begin{equation}
 \left[\nabla_{\nu\nu}^{2}J(\nu^*,q)\right]D\nu^*(q)[h]
 = -\nabla_{\nu q}^{2}J(\nu^*,q)[h].
\end{equation}
This implicit derivative can be generated on the fly and used to train a calibration neural operator whose responses to quote perturbations are stable.

\subsection{XVA and exposure operators}

An XVA engine maps curves, spreads, netting sets, collateral conventions, wrong-way-risk parameters, and portfolio states to exposure profiles and valuation adjustments:
\begin{equation}
 \calG_{\mathrm{xva}}(a)
 = \left(\mathrm{EE}(t),\mathrm{ENE}(t),\mathrm{PFE}_{\alpha}(t),\mathrm{CVA},\mathrm{DVA},\mathrm{FVA}\right).
\end{equation}
Derivative-informed training targets curve PV01s, spread CS01s, wrong-way-risk sensitivities, collateral sensitivities, and exposure-gradient directions. Since XVA is often computed by nested Monte Carlo or regression Monte Carlo, on-the-fly AAD and random-direction derivatives are attractive because full derivative tensors are too expensive to store.

\subsection{Portfolio and risk-control operators}

Let a portfolio operator be
\begin{equation}
 \calG_{\mathrm{port}}(a)=w^*(a),
\end{equation}
where
\begin{equation}
 w^*(a)=\argmin_{w\in\calC}
 \left\{\frac{1}{2}w^\top\Sigma(a)w-\mu(a)^\top w+\Psi(w;a)\right\}.
\end{equation}
Here $a$ may include return forecasts, covariance surfaces, transaction costs, liquidity fields, drawdown states, or factor exposures. The derivative $Dw^*(a)[h]$ is obtained by differentiating the KKT conditions. Training the operator on these derivatives improves stability under forecast revisions and stress scenarios.

\section{Numerical Illustration: Trained and Linear Surrogates}
\label{sec:experiment}

This section reports three controlled experiments. Their role is not to claim a production market benchmark, but to demonstrate the derivative-supervision mechanism---and its limits---in transparent pricing settings, to isolate the effect of the surrogate's model class, and to include one genuine function-to-function operator benchmark. A final market benchmark should still map live sparse quote clouds, curves, or surfaces to dense arbitrage-controlled outputs, as in graph/operator smoothing; Section~\ref{sec:impl} gives the validation checklist such a benchmark must satisfy. Part~A (Sections~\ref{sec:exp-setup}--\ref{sec:exp-interp}) trains a nonlinear network on a Black--Scholes slice. Part~B (Section~\ref{sec:exp-sv}) uses a \emph{linear} random-feature surrogate on Black--Scholes, Heston, and Bates, where derivative supervision reduces exactly to augmenting a least-squares system. Part~C (Section~\ref{sec:exp-surface}) uses a random-feature DeepONet/Galerkin operator to learn a map from an instantaneous-volatility curve to an entire option-price surface. All numbers are produced by the reproducible scripts accompanying this paper. In Part~A, network Greeks use automatic differentiation and analytic Black--Scholes labels; in Part~B, Heston and Bates derivative labels use central differences of the same Carr--Madan quadrature engine; in Part~C, operator JVP labels are analytic curve-shock derivatives of the integrated-variance pricing operator. The code includes finite-difference consistency checks, quadrature sanity checks, and no-arbitrage diagnostics.

\subsection{Part A: a trained network on Black--Scholes}
\label{sec:exp-setup}
\subsubsection*{Setup}

We generated European call prices from the Black--Scholes formula using independent uniform samples
\begin{equation}
 S\in[60,140],\quad K\in[60,140],\quad T\in[0.05,2],\quad r\in[0,0.05],\quad \sigma\in[0.10,0.60].
\end{equation}
The output is the call price; the derivative targets are analytic delta $\partial_S C$ and vega $\partial_{\sigma}C$. For each of eight random seeds we draw $1{,}200$ training states and train two networks that are \emph{identical} in architecture (three hidden layers, $64$ \texttt{tanh} units), initialization, optimizer (Adam, $4{,}000$ steps, learning rate $2\times10^{-3}$), and standardized inputs; they differ only in the loss. The price-only model minimizes standardized price MSE. The derivative-informed model adds standardized delta and vega losses---the network Greeks obtained by automatic differentiation, rescaled to raw units---with a single derivative weight $\lambda$:
\begin{equation}
\calL_{\mathrm{DIF}}=\frac{\norm{\widehat C-C}^2}{s_C^2}
+\lambda\!\left(\frac{\norm{\widehat\Delta-\Delta}^2}{s_\Delta^2}+\frac{\norm{\widehat{\nu}-\nu}^2}{s_\nu^2}\right),
\end{equation}
with robust scales $s_g$ as in Section~\ref{sec:impl}. Both models are evaluated on a single fixed held-out set of $2{,}000$ states (identical across all seeds). Crucially, the second-order Greek \emph{gamma} $\partial_{SS}C$ is \emph{never} supervised; we report it as a held-out test of whether first-order supervision propagates to curvature. We use $\lambda=3\times10^{-3}$, selected on the basis of the sensitivity analysis in Table~\ref{tab:bs-lambda}; this value is reported, not hidden, as it is a genuine hyperparameter of \eqref{eq:full-loss}.

\paragraph{Implementation note.} The derivative-informed network uses a moderate derivative weight and reports mean $\pm$ standard deviation across eight seeds. This design avoids allowing the derivative terms to dominate the objective once the price residual becomes small, while still testing whether first-order Greek supervision improves the learned local geometry.

\subsubsection*{Results}

\begin{table}[!htbp]
\centering
\caption{Black--Scholes slice, \emph{trained network}, mean $\pm$ standard deviation over eight seeds. Price is in currency units; delta is dimensionless; vega is $\partial_\sigma C$ in absolute units; gamma is $\partial_{SS}C$. The final column is never supervised. ``Reduction'' is the mean per-seed RMSE reduction of the derivative-informed (DIF, $\lambda=3\times10^{-3}$) model relative to price-only (PO); positive is better.}
\label{tab:bs-results}
\resizebox{\textwidth}{!}{%
\begin{tabular}{lrrrrr}
\toprule
Model & Price RMSE & Rel. RMSE (\%) & Delta RMSE & Vega RMSE & Gamma RMSE$^{\dagger}$ \\
\midrule
Price-only (PO)            & $0.358\pm0.062$ & $1.26\pm0.22$ & $0.0750\pm0.0065$ & $2.83\pm0.28$ & $0.0197\pm0.0044$ \\
Derivative-informed (DIF)  & $0.301\pm0.051$ & $1.06\pm0.18$ & $0.0638\pm0.0098$ & $1.66\pm0.17$ & $0.0179\pm0.0049$ \\
\midrule
Mean reduction & $+15\%$ & --- & $+15\%$ & $+41\%$ & $+5\%$ \\
Seeds improved & $7/8$ & --- & $8/8$ & $8/8$ & $6/8$ \\
\bottomrule
\end{tabular}%
}
\\[2pt]
{\footnotesize $^{\dagger}$ Gamma is held out (never supervised). Its per-seed reduction ranges from $-73\%$ to $+48\%$.}
\end{table}

\begin{table}[!htbp]
\centering
\caption{Sensitivity to the derivative weight $\lambda$ (seed 42; price-only baseline shown for reference). Vega accuracy improves robustly across $\lambda$, but the benefit to delta erodes and the \emph{unsupervised} gamma degrades sharply as $\lambda$ grows---the signature of overfitting the supervised orders.}
\label{tab:bs-lambda}
\begin{tabular}{lrrrr}
\toprule
Configuration & Delta RMSE & Vega RMSE & Gamma RMSE & Price RMSE \\
\midrule
Price-only            & $0.0798$ & $3.016$ & $0.0208$ & $0.4086$ \\
DIF, $\lambda=0.003$  & $0.0686$ & $1.677$ & $0.0152$ & $0.3250$ \\
DIF, $\lambda=0.007$  & $0.0717$ & $1.702$ & $0.0333$ & $0.3414$ \\
DIF, $\lambda=0.01$   & $0.0748$ & $1.761$ & $0.0350$ & $0.3525$ \\
DIF, $\lambda=0.05$   & $0.0859$ & $1.951$ & $0.0648$ & $0.4075$ \\
\bottomrule
\end{tabular}
\end{table}

At the tuned weight, derivative supervision reduces vega RMSE by $41\%$ and delta RMSE by $15\%$ on \emph{every} seed, and---rather than trading off against price---also reduces price RMSE by $15\%$ on seven of eight seeds. At this gentle weight the derivative loss behaves as a smoothness regularizer: the supervised slopes are accurate everywhere, not only at training points. The asymmetry between vega and delta is informative: value-only training already pins down delta reasonably well (delta is the dominant first-order term in the price's own sensitivity), so the marginal value of supervising delta is modest, whereas vega---an orthogonal direction that value-only training leaves comparatively under-determined---benefits much more. Table~\ref{tab:bs-lambda} shows the trade-off is weight-dependent: increasing $\lambda$ keeps the vega gain but erodes the delta gain and sharply degrades the unsupervised gamma.

\subsubsection*{Interpretation}
\label{sec:exp-interp}

Two lessons carry over to high-dimensional financial operators. First, derivative supervision changes the learned local geometry, not just the level: a price-only surrogate can interpolate prices while learning systematically wrong sensitivities, and a modest derivative penalty corrects this while improving, not harming, value accuracy. This effect compounds in production, where downstream calculations propagate derivative errors across scenarios, instruments, and optimization loops. Second, and equally important, the held-out gamma is \emph{not} reliably improved by first-order supervision (mean $+5\%$, but six of eight seeds and a wide range), and it degrades as the first-order weight grows. One obtains accuracy in the order of sensitivity one supervises and little guarantee beyond it. For a hedging desk operating at finite rebalancing frequency---where, by Theorem~\ref{thm:hedge-disc} and Proposition~\ref{prop:gamma-mesh}, gamma controls the discretization error---this means second-order supervision (Section~\ref{sec:second-order}) is not optional polish but a requirement.

\subsection{Part B: stochastic-volatility and jump models with a linear surrogate}
\label{sec:exp-sv}

To test the mechanism beyond Black--Scholes, and to isolate the role of the surrogate's model class, we repeat the comparison on three pricing maps---Black--Scholes, Heston \citep{Heston1993}, and Bates \citep{Bates1996}---using a deliberately simple surrogate: a one-hidden-layer \texttt{tanh} \emph{random-feature} model \citep{RahimiRecht2008},
\begin{equation}
\widehat C_\beta(x)=\beta_0+\beta_x^\top x+\sum_{m=1}^{M}\beta_m\tanh(w_m^\top x+b_m),
\label{eq:rf-surrogate}
\end{equation}
in which the hidden weights $(w_m,b_m)$ are drawn once and \emph{fixed}, and only the readout $\beta$ is fitted. Because \eqref{eq:rf-surrogate} is linear in $\beta$, both the price-only and derivative-informed fits are a single ridge solve and derivative supervision is \emph{exactly} the augmentation of the least-squares system with extra (derivative) rows:
\begin{equation}
\min_\beta\ \sum_i\big|\widehat C_\beta(x_i)-C_i\big|^2
+\sum_{j\in\calJ}\lambda_j\sum_i\Big|\tfrac{D_j\widehat C_\beta(x_i)-D_jC_i}{s_j}\Big|^2
+\eta\norm{\beta}^2 .
\label{eq:rf-loss}
\end{equation}
This removes optimizer noise and makes the effect of the derivative rows transparent; it is not a substitute for a trained FNO or DeepONet, but a controlled finite-dimensional slice. Heston and Bates prices are computed by Carr--Madan quadrature \citep{CarrMadan1999} with $96$ Gauss--Legendre nodes on $[0,160]$ and damping $\alpha=1.5$; the derivative targets are obtained by central differences of the same quadrature engine. This is the finite-dimensional, reproducible analogue of differentiating a high-fidelity pricing engine by adjoint or automatic-differentiation methods in production. The characteristic functions are recorded in Appendix~\ref{app:cf}; as parameter-free checks, the Heston price collapses to Black--Scholes as the vol-of-vol vanishes, and the Bates price collapses to Heston as the jump intensity vanishes, both to quadrature precision. Each configuration is run over eight seeds, \emph{reseeding both the data sample and the random-feature draw}, and we report mean $\pm$ standard deviation. Per-model derivative weights are disclosed in Table~\ref{tab:rf-design}. As in Part~A, we additionally evaluate quantities that are \emph{never supervised}: the second-order Greek gamma $\partial_{SS}C$ in all three models, and, for Heston and Bates, an unsupervised parameter sensitivity ($\partial_\rho C$ and $\partial_{\sigma_J}C$ respectively). The Heston parameter box is not restricted to the Feller region $2\kappa\theta\ge\xi^2$; this is intentional, because market calibrations often cross the boundary, but all prices are generated by the same characteristic-function engine and the experiment should therefore be read as a surrogate-consistency test rather than as a claim about boundary classification.

\begin{table}[!htbp]
\centering
\caption{Design of the random-feature experiments. Prices for Heston and Bates use Carr--Madan quadrature; derivative targets use central differences of the same quadrature pricer. Gamma and the parenthesized parameter sensitivities are held out (never supervised).}
\label{tab:rf-design}
\resizebox{\textwidth}{!}{%
\begin{tabular}{lrrrrll}
\toprule
Model & Input dim. & Train & Test & Features & Supervised targets & Weights $\lambda_j$ \\
\midrule
Black--Scholes & 5  & 1{,}600 & 1{,}000 & 512  & $\partial_S C,\ \partial_\sigma C$ & $0.01,\ 0.01$ \\
Heston         & 9  & 2{,}500 & 1{,}200 & 1{,}536 & $\partial_S C,\ \partial_{v_0}C,\ \partial_\theta C$ & $0.02,\ 0.001,\ 0.001$ \\
Bates          & 12 & 3{,}200 & 1{,}200 & 1{,}792 & $\partial_S C,\ \partial_{v_0}C,\ \partial_\lambda C,\ \partial_{\mu_J}C$ & $0.05,\ 0.001,\ 0.001,\ 0.001$ \\
\bottomrule
\end{tabular}%
}
\end{table}

\begin{table}[!htbp]
\centering
\caption{Random-feature surrogate, mean $\pm$ standard deviation over eight seeds; ``red.'' is the mean per-seed RMSE reduction (positive is better) and ``$k/8$'' the number of seeds improved. Columns marked $\dagger$ are never supervised. Sensitivities $\partial_{v_0},\partial_\theta,\partial_\lambda,\partial_{\mu_J},\partial_\rho,\partial_{\sigma_J}$ are reported in raw units.}
\label{tab:rf-results}
\resizebox{\textwidth}{!}{%
\begin{tabular}{llrrr}
\toprule
Model & Quantity & Price-only & Derivative-informed & red.\ ($k/8$) \\
\midrule
\multirow{4}{*}{Black--Scholes}
 & Price RMSE              & $0.507\pm0.047$ & $0.541\pm0.038$ & $-7\%$\ \ ($1/8$) \\
 & Delta RMSE             & $0.0538\pm0.0046$ & $0.0444\pm0.0029$ & $+17\%$\ ($8/8$) \\
 & Vega RMSE              & $6.13\pm0.71$ & $3.46\pm0.25$ & $+43\%$\ ($8/8$) \\
 & Gamma RMSE$^\dagger$   & $0.0065\pm0.0010$ & $0.0055\pm0.0007$ & $+15\%$\ ($7/8$) \\
\midrule
\multirow{6}{*}{Heston}
 & Price RMSE              & $0.835\pm0.054$ & $0.693\pm0.053$ & $+17\%$\ ($8/8$) \\
 & Delta RMSE             & $0.0666\pm0.0046$ & $0.0509\pm0.0020$ & $+23\%$\ ($8/8$) \\
 & $\partial_{v_0}C$ RMSE  & $26.76\pm1.67$ & $8.13\pm0.42$ & $+70\%$\ ($8/8$) \\
 & $\partial_\theta C$ RMSE & $26.34\pm2.35$ & $9.98\pm0.49$ & $+62\%$\ ($8/8$) \\
 & Gamma RMSE$^\dagger$   & $0.0071\pm0.0006$ & $0.0059\pm0.0004$ & $+17\%$\ ($8/8$) \\
 & $\partial_\rho C$ RMSE$^\dagger$ & $3.45\pm0.25$ & $2.73\pm0.27$ & $+21\%$\ ($8/8$) \\
\midrule
\multirow{7}{*}{Bates}
 & Price RMSE              & $1.027\pm0.024$ & $1.053\pm0.038$ & $-3\%$\ \ ($0/8$) \\
 & Delta RMSE             & $0.0723\pm0.0042$ & $0.0639\pm0.0038$ & $+12\%$\ ($8/8$) \\
 & $\partial_{v_0}C$ RMSE  & $23.53\pm1.26$ & $8.74\pm0.45$ & $+63\%$\ ($8/8$) \\
 & $\partial_\lambda C$ RMSE & $2.91\pm0.20$ & $0.98\pm0.05$ & $+66\%$\ ($8/8$) \\
 & $\partial_{\mu_J}C$ RMSE & $11.23\pm0.64$ & $2.67\pm0.19$ & $+76\%$\ ($8/8$) \\
 & Gamma RMSE$^\dagger$   & $0.0066\pm0.0004$ & $0.0061\pm0.0004$ & $+7\%$\ \ ($8/8$) \\
 & $\partial_{\sigma_J}C$ RMSE$^\dagger$ & $9.07\pm0.55$ & $7.42\pm0.45$ & $+18\%$\ ($8/8$) \\
\bottomrule
\end{tabular}%
}
\end{table}
\FloatBarrier

Three readings of Table~\ref{tab:rf-results}. First, the supervised sensitivities improve substantially and consistently, and the gains are largest exactly where they should be: the stochastic-volatility and jump parameters ($v_0,\theta,\lambda,\mu_J$), which barely move price levels but strongly shape the local response, improve by $60$--$76\%$ on every seed, whereas delta---already well-identified by price levels---improves more modestly. This is the high-dimensional analogue of the vega-versus-delta asymmetry seen in Part~A. Second, price accuracy is genuinely model-dependent: it improves by $17\%$ for Heston but worsens slightly for Black--Scholes ($-7\%$) and Bates ($-3\%$). There is no universal ``free regularization'' of the price level; whether derivative rows help or mildly hurt the value fit depends on the operator. Third, and most informative for the rest of the paper, \emph{every unsupervised quantity also improves}---gamma by $7$--$17\%$ and the unsupervised parameter sensitivities by $18$--$21\%$, on essentially every seed. In this linear surrogate, supervising some derivatives helps the others.

\subsection{Part C: curve-to-surface operator benchmark}
\label{sec:exp-surface}

The previous experiments are pricing-map slices. To test the operator claim directly, we add a stylized function-to-function benchmark. The input is an instantaneous-volatility curve $a(t)$ sampled at $24$ maturities, and the output is a dense European-call price surface on $21$ log-moneyness points and $24$ maturities. The high-fidelity operator is
\begin{equation}
 a(t)\longmapsto \sigma_{\mathrm{eff}}(T)
 =\left(\frac{1}{T}\int_0^T a(s)^2\,ds\right)^{1/2}
 \longmapsto C_{\mathrm{BS}}(K,T;\sigma_{\mathrm{eff}}(T)),
\label{eq:curve-surface-operator}
\end{equation}
with fixed spot $S_0=100$ and rate $r=2\%$. The derivative direction is a low-frequency curve shock $h(t)$. Differentiating \eqref{eq:curve-surface-operator} gives the exact directional target
\begin{equation}
D\calG(a)[h](K,T)
=\mathrm{Vega}_{\mathrm{BS}}(K,T;\sigma_{\mathrm{eff}}(T))
\frac{\int_0^T a(s)h(s)\,ds}{T\sigma_{\mathrm{eff}}(T)} .
\label{eq:curve-surface-jvp}
\end{equation}
This is a genuine operator-level sensitivity: a perturbation of the input \emph{curve} changes an entire output \emph{surface} through a maturity-dependent integral.

The surrogate is a fixed-random-feature DeepONet/Galerkin neural operator. With branch features $b_m(a)=\tanh(w_m^\top a+c_m)$ and trunk features $t_m(K,T)=\tanh(v_m^\top q(K,T)+d_m)$, the fitted operator is
\begin{equation}
\widehat{\calG}_\beta(a)(K,T)
=\sum_{m=1}^{M}\beta_m b_m(a)t_m(K,T),
\qquad M=512.
\label{eq:rf-deeponet}
\end{equation}
The derivative-informed version augments the value least-squares system by one randomly sampled JVP row per training curve:
\begin{equation}
\min_\beta \sum_i\norm{\widehat{\calG}_\beta(a_i)-\calG(a_i)}_2^2
+\lambda\sum_i\norm{D\widehat{\calG}_\beta(a_i)[h_i]-D\calG(a_i)[h_i]}_2^2
+\eta\norm{\beta}_2^2,
\label{eq:operator-rf-loss}
\end{equation}
with $\lambda=0.1$, $800$ training curves, $300$ test curves, and eight independent seeds. Because \eqref{eq:rf-deeponet} is linear in $\beta$, the experiment is again an exactly reproducible ridge solve, but now on a map between discretized functions rather than a finite-dimensional scalar price map.

\begin{table}[!htbp]
\centering
\caption{Operator-level curve-to-surface benchmark, mean $\pm$ standard deviation over eight seeds. PO is value-only; DIF adds one on-the-fly curve-shock JVP per training curve. Positive reduction is better. The no-arbitrage diagnostics report the percentage of grid locations violating strike monotonicity, strike convexity, or calendar monotonicity.}
\label{tab:operator-surface}
\resizebox{\textwidth}{!}{%
\begin{tabular}{lrrrr}
\toprule
Metric & Price-only & Derivative-informed & Mean reduction & Seeds improved \\
\midrule
Price RMSE & $0.322\pm0.125$ & $0.229\pm0.020$ & $+23.2\%$ & $8/8$ \\
Relative price RMSE & $1.68\pm0.65\%$ & $1.20\pm0.10\%$ & $+23.2\%$ & $8/8$ \\
Curve-shock JVP RMSE & $1.080\pm0.240$ & $0.586\pm0.081$ & $+44.1\%$ & $8/8$ \\
Relative JVP RMSE & $68.6\pm15.2\%$ & $37.2\pm4.7\%$ & $+44.1\%$ & $8/8$ \\
Strike-monotonicity violations & $2.85\pm0.13\%$ & $3.07\pm0.12\%$ & $-7.9\%$ & $0/8$ \\
Strike-convexity violations & $20.13\pm1.28\%$ & $20.88\pm1.74\%$ & $-3.7\%$ & $2/8$ \\
Calendar violations & $2.11\pm0.21\%$ & $2.11\pm0.21\%$ & $-0.3\%$ & $5/8$ \\
\bottomrule
\end{tabular}%
}
\end{table}
\FloatBarrier

Table~\ref{tab:operator-surface} is the paper's main operator-level empirical result. The derivative-informed operator reduces out-of-sample directional curve-shock error by $44.1\%$ on every seed and also improves the price surface fit by $23.2\%$. This is the cleanest demonstration of the paper's operator claim: derivative rows can improve the learned local geometry of a map from input curves to output surfaces. At the same time, the no-arbitrage diagnostics are deliberately sobering. Derivative supervision does not materially reduce strike-monotonicity, strike-convexity, or calendar violations; in two strike diagnostics it even moves slightly in the wrong direction. This is not a failure of the derivative idea. It shows that sensitivity consistency and economic admissibility are different constraints. A production volatility or price-surface operator should combine derivative-informed training with the no-arbitrage penalties of Section~\ref{sec:noarb} or, preferably when guarantees matter, with an arbitrage-free output parameterization such as a convex-in-strike price basis, an integrated positive density, an SVI/eSSVI layer, or a monotone/convex spline construction.

\subsection{Synthesis: trained versus linear surrogates}
\label{sec:exp-synthesis}

The three experiments agree on the central point: when a derivative quantity is supervised, derivative-informed training improves it. They also separate three issues that should not be conflated. First, nonlinear training does not guarantee transfer to unsupervised higher-order Greeks. Second, linear random-feature augmentation can smooth unsupervised derivatives through the shared readout. Third, operator-level JVP accuracy is not the same thing as no-arbitrage. The first two regimes disagree on exactly one thing---the \emph{unsupervised} second-order Greek. In the linear random-feature surrogate, gamma improves reliably (Table~\ref{tab:rf-results}, every model, $7$--$8$ of $8$ seeds); in the trained network, it does not (Part~A: mean $+5\%$, $6/8$ seeds, per-seed range $-73\%$ to $+48\%$). The same held-out quantity, the same pricing model, opposite robustness.

The explanation is the parameterization, and it matters for how one should read derivative-informed results in general. When the model is linear in its parameters, as in \eqref{eq:rf-surrogate}, the derivative rows in \eqref{eq:rf-loss} are a Tikhonov-type penalty in function space: they constrain the same coefficients $\beta$ that determine the value fit, so smoothing the supervised slopes necessarily smooths the whole fitted function and lifts \emph{all} of its derivatives, supervised or not. A nonlinearly trained network has the capacity to do something the linear model cannot: satisfy the supervised derivative constraints at the training points while leaving unsupervised orders---curvature, in particular---comparatively free. Transfer to unsupervised sensitivities is therefore a property of the linear regime, not a guarantee one inherits when moving to a trained operator. The practical consequence is the same lesson reached from the theory in Section~\ref{sec:theory-derivatives}: for the order of sensitivity one actually needs at deployment---gamma, for a desk hedging at finite frequency, by Theorem~\ref{thm:hedge-disc}---it is safest to supervise that order directly rather than to assume it is inherited from lower-order supervision.

\section{Implementation Details}
\label{sec:impl}

\subsection{Choice of derivative directions}

Derivative directions should not be sampled blindly. In finance, economically meaningful directions include:
\begin{itemize}[leftmargin=7mm]
\item parallel, slope, curvature, and key-rate moves for curves;
\item volatility level, skew, convexity, and term-structure shocks;
\item PCA factors from historical curve or surface changes;
\item correlation eigenmode perturbations;
\item credit-spread parallel and sector-specific shifts;
\item liquidity, funding, collateral, and margin shocks;
\item adversarial directions found by maximizing local surrogate error.
\end{itemize}

A practical distribution is a mixture
\begin{equation}
 \nu = \alpha\nu_{\mathrm{PCA}}+(1-\alpha)\nu_{\mathrm{stress}},
\end{equation}
where $\nu_{\mathrm{PCA}}$ captures frequent historical moves and $\nu_{\mathrm{stress}}$ captures regulatory or desk-defined scenarios.

\subsection{Loss scaling}

Prices, deltas, vegas, and curve sensitivities have different units. Without normalization, the largest unit dominates the loss. A robust implementation uses standardized losses
\begin{equation}
 \calL_{\mathrm{deriv}}
 = \sum_{g\in\mathcal{S}}\lambda_g
 \E\left[\frac{\norm{\widehat g-g}^{2}}{s_g^2+\epsilon}\right],
\end{equation}
where $g$ ranges over Greeks or derivative actions and $s_g$ is a robust scale such as median absolute deviation or an exponentially weighted standard deviation.

\subsection{Derivative generation budget}

Let $c_0$ be the cost of one high-fidelity value solve and $c_J$ the cost of one derivative action. If $M$ derivative directions are used per state, the per-batch cost is approximately
\begin{equation}
 C_{\mathrm{batch}} \approx Bc_0 + BMc_J.
\end{equation}
The on-the-fly method is attractive when $M$ is small and targeted. In AAD settings, the marginal cost of many first-order derivatives of one scalar portfolio value can be a small multiple of the primal valuation cost; in forward tangent settings, the cost scales more directly with the number of directions. Thus the correct implementation depends on whether the desk needs many input sensitivities of a scalar portfolio value or output sensitivities across many instruments.

\subsection{Validation protocol}

A derivative-informed financial operator should be validated on at least four dimensions:
\begin{enumerate}[leftmargin=7mm]
\item value error: price, implied volatility, exposure, XVA, or risk-measure RMSE;
\item derivative error: delta, vega, PV01, CS01, correlation sensitivity, and JVP/VJP errors;
\item economic error: no-arbitrage violations, monotonicity violations, convexity violations, and boundary violations;
\item downstream error: hedge P\&L variance, calibration convergence, optimizer regret, and stress-test stability.
\end{enumerate}
A model with low value error but high derivative error should not be accepted for hedging or optimization.

For option-surface applications the following diagnostic table is minimal rather than optional:
\begin{center}
\small
\begin{tabular}{ll}
\toprule
Diagnostic & Failure measured \\
\midrule
$\%\{\partial_K \widehat C>0\}$ & call monotonicity violation \\
$\%\{\partial_{KK}\widehat C<0\}$ & butterfly arbitrage / convexity violation \\
$\%\{\partial_T \widehat C<0\}$ & calendar arbitrage on forward-normalized prices \\
FD versus AD/analytic Greeks & implementation error in derivative labels \\
96-node versus 192-node quadrature & pricing-engine discretization error \\
One-step hedge P\&L variance & downstream hedge instability \\
Stress-JVP RMSE along PCA/stress modes & local scenario instability \\
\bottomrule
\end{tabular}
\end{center}
The present experiments report the first step---value and sensitivity accuracy---and the scripts include finite-difference checks. A production benchmark should add the full table above, especially no-arbitrage and downstream hedge diagnostics.

\section{Extensions}

\subsection{Second-order derivative-informed training}
\label{sec:second-order}

Some applications require gamma, cross-gamma, Hessian-vector products, or Gauss--Newton information. A second-order extension adds
\begin{equation}
 \lambda_H\E_{a,v,u}\left[\norm{D^2\wideG(a)[v,u]-D^2\calG(a)[v,u]}_{\calU}^{2}\right].
\end{equation}
Second-order information is more expensive but valuable for convexity constraints, stress curvature, and second-order hedging. Theorem~\ref{thm:hedge-disc} and Proposition~\ref{prop:gamma-mesh} make the case sharp for hedging desks: the discretization component of hedging error is controlled at leading order by the surrogate's gamma, so a surrogate intended for finite-frequency rebalancing must match second-order sensitivities, not only first-order ones. The Black--Scholes experiment of Section~\ref{sec:experiment} confirms that first-order supervision does \emph{not} reliably propagate to the unsupervised gamma; when curvature is needed, it must be supervised. Like the first-order objective, the second-order term admits an on-the-fly random-sketch form, $\E_{v,u}\,\Tr\big((\widehat H-H)[v,u]\,\cdots\big)$ with $v,u$ drawn from economically meaningful pairs (e.g.\ spot$\times$spot for gamma, spot$\times$vol for vanna, vol$\times$vol for volga), avoiding storage of the full Hessian tensor.

\subsection{Distributionally robust derivative learning}

Financial markets are nonstationary. A derivative-informed operator should be robust not only under the training distribution but also under stressed perturbations. One can replace \eqref{eq:full-loss} by
\begin{equation}
 \sup_{\widetilde\mu: W(\widetilde\mu,\mu)\le \rho}
 \E_{a\sim\widetilde\mu}\left[\ell_{\mathrm{value}}(a;\theta)+\ell_{\mathrm{deriv}}(a;\theta)\right],
\end{equation}
where $W$ is a Wasserstein distance. This would emphasize stability under plausible but rare market regimes.

\subsection{Active derivative sampling}

Derivative directions can be chosen adaptively. Let
\begin{equation}
 e_J(a,v)=\norm{D\wideG(a)[v]-D\calG(a)[v]}_{\calU}^{2}.
\end{equation}
An active sampler selects directions with large estimated derivative error, subject to economic constraints:
\begin{equation}
 v^*(a)=\argmax_{v\in\mathcal{V},\;\norm{v}\le1} e_J(a,v).
\end{equation}
This turns derivative-informed training into a local adversarial stress-testing procedure.

\section{Limitations}

Derivative-informed neural operators are not a free lunch. First, derivative labels can be noisy, especially under discontinuous payoffs, early exercise, barriers, callable structures, and Monte Carlo estimators with pathwise discontinuities. Second, derivative supervision can trade off against price accuracy unless losses are carefully scaled. Third, no-arbitrage constraints are model- and asset-class-specific; imposing the wrong constraint can create false comfort. Fourth, a neural operator trained on a historical market-state distribution may fail under structural regime changes. Fifth, the Heston and Bates experiments are intentionally finite-dimensional and the curve-to-surface benchmark is stylized; together they demonstrate mechanism and operator geometry, but they do not replace a market-scale graph/Fourier neural-operator benchmark on live sparse quotes, liquidity-filtered instruments, microstructure noise, and changing strike/maturity grids. Sixth, Table~\ref{tab:operator-surface} shows a useful negative result: derivative consistency alone does not enforce no-arbitrage, so economic constraints must be explicit rather than assumed. Finally, validation must be desk-specific: an equity-volatility surface, an XVA netting set, and a rates portfolio require different derivative directions, metrics, and governance.

\section{Conclusion}

Financial surrogates should be judged not only by whether they reproduce prices, but by whether they reproduce the local geometry of the pricing and risk system. On-the-fly derivative-informed neural operators offer a principled way to do this. They combine neural operator architectures with the derivative machinery already used in quantitative finance: adjoints, tangent equations, AAD, pathwise sensitivities, and implicit differentiation. The resulting models are designed for the tasks finance actually performs: hedging, stress testing, calibration, XVA, capital allocation, and control.

The paper's central message is that financial operator learning should move from value learning to value-plus-derivative learning. Prices are the surface. Greeks and risk sensitivities are the geometry. A surrogate that learns both is more likely to be useful in production than a surrogate that learns only the surface level. Three qualifications sharpen this message. First, at a moderate weight the derivative penalty is not necessarily a tax on value accuracy; in the trained Black--Scholes and operator-level curve-to-surface experiments it improves the value fit as well. Second, accuracy is obtained most reliably in the order of sensitivity that is supervised and is not guaranteed beyond it: a desk that hedges at finite frequency, or imposes convexity, must supervise second-order sensitivities as well. Third, derivative consistency is not economic admissibility: no-arbitrage constraints, monotonicity, convexity, and calendar structure must be enforced separately. The discipline that makes this practical---on-the-fly generation, random sketching along economically meaningful directions, robust loss scaling, and no-arbitrage penalties---is exactly the discipline quantitative finance already possesses.

\appendix

\section{Compact Derivation of Black--Scholes Derivative Targets}

For a European call with spot $S$, strike $K$, maturity $T$, rate $r$, and volatility $\sigma$,
\begin{align}
 d_1 &= \frac{\log(S/K)+(r+\sigma^2/2)T}{\sigma\sqrt{T}},
 & d_2 &= d_1-\sigma\sqrt{T}, \\
 C &= S\Phi(d_1)-Ke^{-rT}\Phi(d_2),
\end{align}
where $\Phi$ is the standard normal cdf. The analytic derivative targets used in Table~\ref{tab:bs-results} are
\begin{equation}
 \Delta = \partial_S C = \Phi(d_1),
 \qquad
 \mathrm{Vega} = \partial_{\sigma}C = S\sqrt{T}\,\phi(d_1),
\end{equation}
where $\phi$ is the standard normal pdf. The held-out second-order check is the gamma
\begin{equation}
 \Gamma = \partial_{SS}C = \frac{\phi(d_1)}{S\sigma\sqrt{T}},
\end{equation}
which is computed for evaluation only and never enters the training loss.

\section{Characteristic Functions for the Heston and Bates Experiments}
\label{app:cf}

The random-feature experiments of Section~\ref{sec:exp-sv} price Heston and Bates calls by Carr--Madan quadrature \citep{CarrMadan1999}. For damping $\alpha>0$ and log-strike $k=\log K$, the call price is
\begin{align}
C(K)
&=\frac{e^{-\alpha k}}{\pi}\int_0^\infty
\Re\!\left[e^{-iuk}\,\psi_T(u)\right]\,du,
\label{eq:carr-madan}\\
\psi_T(u)
&:=\frac{e^{-rT}\,\phi_T\!\big(u-(\alpha+1)i\big)}
{\alpha^2+\alpha-u^2+i(2\alpha+1)u}.
\end{align}
evaluated with $96$ Gauss--Legendre nodes on $[0,160]$ and $\alpha=1.5$. The Heston characteristic function of $\log S_T$, in the numerically stable (``little trap'') form, is
\begin{align}
d(u)&=\sqrt{(\rho\xi iu-\kappa)^2+\xi^2(iu+u^2)},\\
g(u)&=\frac{\kappa-\rho\xi iu-d(u)}{\kappa-\rho\xi iu+d(u)},\\
C(T,u)&=iu(\log S_0+rT)+\frac{\kappa\theta}{\xi^2}
\left[(\kappa-\rho\xi iu-d(u))T
-2\log\frac{1-g(u)e^{-d(u)T}}{1-g(u)}\right],\\
D(T,u)&=\frac{\kappa-\rho\xi iu-d(u)}{\xi^2}
\frac{1-e^{-d(u)T}}{1-g(u)e^{-d(u)T}},\\
\phi_T^{\mathrm{H}}(u)&=\exp\{C(T,u)+D(T,u)v_0\}.
\end{align}
The Bates model multiplies $\phi_T^{\mathrm H}$ by a compensated lognormal-jump factor with intensity $\lambda$, mean log-jump $\mu_J$, and jump volatility $\sigma_J$,
\begin{align}
\phi_T^{\mathrm{B}}(u)
&=\phi_T^{\mathrm{H}}(u)\,\exp\!\left\{\lambda T
\big(e^{iu\mu_J-\tfrac12\sigma_J^2u^2}-1-iu\,\kappa_J\big)\right\},
\label{eq:bates-cf}\\
\kappa_J&=e^{\mu_J+\sigma_J^2/2}-1 .
\end{align}
where the $-iu\kappa_J$ term is the risk-neutral drift compensator, so that the discounted price is a martingale. Setting $\xi\to0$ with $v_0=\theta=\sigma^2$ reduces $\phi_T^{\mathrm H}$ to the Black--Scholes characteristic function, and setting $\lambda\to0$ reduces $\phi_T^{\mathrm B}$ to $\phi_T^{\mathrm H}$; both identities hold numerically to quadrature precision and are used as correctness checks. The released script obtains derivative targets by central differences of \eqref{eq:carr-madan}; an adjoint or automatic-differentiation implementation would produce the same objects without the finite-difference cost.

\section{Curve-to-Surface Operator Benchmark}
\label{app:curve-surface}

For completeness, this appendix records the exact construction used in Section~\ref{sec:exp-surface}. The input curve $a=(a_1,\ldots,a_d)$ is observed on maturities $0<T_1<\cdots<T_d$ with increments $\Delta T_j=T_j-T_{j-1}$ and $T_0=0$. The discrete integrated variance and effective volatility are
\begin{equation}
V_j(a)=\sum_{\ell\le j}a_\ell^2\Delta T_\ell,
\qquad
\sigma_j(a)=\sqrt{\frac{V_j(a)}{T_j}}.
\end{equation}
For a direction $h=(h_1,\ldots,h_d)$,
\begin{equation}
DV_j(a)[h]=2\sum_{\ell\le j}a_\ell h_\ell\Delta T_\ell,
\qquad
D\sigma_j(a)[h]=\frac{DV_j(a)[h]}{2T_j\sigma_j(a)}.
\end{equation}
At strike $K_i$ and maturity $T_j$, the price and directional derivative are
\begin{equation}
C_{ij}(a)=C_{\mathrm{BS}}(S_0,K_i,T_j,r,\sigma_j(a)),
\qquad
DC_{ij}(a)[h]=\mathrm{Vega}_{\mathrm{BS}}(S_0,K_i,T_j,r,\sigma_j(a))D\sigma_j(a)[h].
\end{equation}
The low-frequency directions used in the experiment are random draws from the span of
\begin{equation}
1,\quad T-\bar T,\quad \sin(\pi T/2),\quad \cos(\pi T/2),\quad
\sin(2\pi T/2),\quad \cos(2\pi T/2),
\end{equation}
scaled to a realistic volatility-shock magnitude. The no-arbitrage diagnostics in Table~\ref{tab:operator-surface} are the empirical fractions of grid points satisfying
\begin{align}
C(K_{i+1},T_j)-C(K_i,T_j)&>10^{-4},\\
C(K_{i+1},T_j)-2C(K_i,T_j)+C(K_{i-1},T_j)&<-10^{-4},\\
C(K_i,T_{j+1})-C(K_i,T_j)&<-10^{-4}.
\end{align}

\section{Operator-Level Pseudocode}

\paragraph{Derivative-informed pricing operator.}
\begin{enumerate}[leftmargin=9mm]
\item Input: market-state sampler $\mu$, high-fidelity engine $\calG$, direction sampler $\nu$, neural operator $\wideG$.
\item For each mini-batch $b=1,\dots,B$:
\begin{enumerate}[leftmargin=7mm]
\item sample market states $a_i\sim\mu$;
\item compute high-fidelity values $y_i=\calG(a_i)$;
\item sample directions $v_{ij}\sim\nu(\cdot\mid a_i)$;
\item compute $z_{ij}=D\calG(a_i)[v_{ij}]$ by tangent, adjoint, AAD, or pathwise Monte Carlo;
\item compute $\widehat y_i=\wideG(a_i)$ and $\widehat z_{ij}=D\wideG(a_i)[v_{ij}]$;
\item update $\theta$ using value, derivative, arbitrage, and regularization losses.
\end{enumerate}
\item Output: derivative-consistent neural operator $\wideG$.
\end{enumerate}

\paragraph{Recommended reporting table.}
Every empirical study should report value and derivative errors simultaneously:
\begin{center}
\small
\setlength{\tabcolsep}{4pt}
\begin{tabular}{lrrrrrr}
\toprule
Model & Price & Delta & Gamma & Vega & PV01 & Hedge P\&L \\
\midrule
High-fidelity engine & -- & -- & -- & -- & -- & -- \\
Price-only neural operator & & & & & & \\
Derivative-informed neural operator & & & & & & \\
No-arbitrage derivative-informed operator & & & & & & \\
\bottomrule
\end{tabular}
\end{center}

\bibliographystyle{plainnat}
\bibliography{difno_finance_Aplus_polished_refs}

\end{document}